\newcommand{\atms}{\mathrm{Atm}}
\newcommand{\condzero}{\mathrm{C0}}
\newcommand{\condone}{\mathrm{C1}}
\newcommand{\condtwo}{\mathrm{C2}}
\newcommand{\eqref}[1]{(\ref{#1})}
\newcommand{\forces}{\Vdash}
\newcommand{\iimp}{\twoheadrightarrow}
\newcommand{\implist}[5]{#2_{#4} #1 {#3}_{#4} , \cdots ,  #2_{#5}
  #1 {#3}_{#5}}
\newcommand{\lcnxt}{\mathrm{KM}_{lin}}
\newcommand{\lgint}{\mathrm{Int}}
\newcommand{\lgkm}{\mathrm{KM}}
\newcommand{\lglc}{\mathrm{LC}}
\newcommand{\limp}{\rightarrow}
\newcommand{\mprulename}{\small mp}
\newcommand{\nxt}{\rhd}
\newcommand{\later}{\nxt}
\newcommand{\nxtlist}[2]{\nxt {#1}_{1} , \cdots ,  \nxt {#1}_{#2}}
\newcommand{\oplist}[4]{#1 {#2}_{#3} , \cdots ,  #1 #2_{#4}}
\newcommand{\opset}[2]{{#2}^{#1}}
\newcommand{\opsetmi}[3]{\opset{#1}{#2}_{-#3}}
\newcommand{\rel}{R}
\newcommand{\releq}{R^{=}}
\newcommand{\res}[2]{r^{#1}_{#2}}
\newcommand{\seq}{\vdash}
\newcommand{\seqlcnxt}{\mathrm{SKM}_{lin}}
\newcommand{\seqkm}{\mathrm{SKM}}
\newcommand{\trees}{\mathcal{S}}
\newcommand{\wbx}{\Box}
\newcommand{\toprightrulename}{\small \top \mathrm{R}}
\newcommand{\toprule}{
\AxiomC{}
\LeftLabel{$\toprightrulename$}
\UnaryInfC{$\Gamma \seq \top, \Delta $}
\DisplayProof
}
\newcommand{\idrulename}{\small id}
\newcommand{\idrule}{
\AxiomC{}
\LeftLabel{\idrulename}
\UnaryInfC{$\Gamma, \varphi \seq \varphi, \Delta $}
\DisplayProof
}
\newcommand{\botleftrulename}{\small \bot \mathrm{L}}
\newcommand{\botrule}{
\AxiomC{}
\LeftLabel{$\botleftrulename$}
\UnaryInfC{$\Gamma , \bot \seq \Delta $}
\DisplayProof
}
\newcommand{\orleftrulename}{\small \lor \mathrm{L}}
\newcommand{\orleftrule}{
\AxiomC{$\Gamma , \varphi \seq \Delta $}
\AxiomC{$\Gamma , \psi \seq  \Delta $}
\LeftLabel{$\orleftrulename$}
\BinaryInfC{$\Gamma, \varphi \lor \psi \seq \Delta $}
\DisplayProof
}
\newcommand{\orrightrulename}{\small \lor \mathrm{R}}
\newcommand{\orrightrule}{
\AxiomC{$\Gamma \seq  \varphi, \psi , \Delta $}
\LeftLabel{$\orrightrulename$}
\UnaryInfC{$\Gamma  \seq \varphi\lor\psi , \Delta $}
\DisplayProof
}
\newcommand{\andrightrulename}{\small \land \mathrm{R}}
\newcommand{\andrightrule}{
\AxiomC{$\Gamma  \seq \varphi, \Delta $}
\AxiomC{$\Gamma  \seq \psi,  \Delta $}
\LeftLabel{$\andrightrulename$}
\BinaryInfC{$\Gamma \seq \varphi\land\psi , \Delta $}
\DisplayProof
}
\newcommand{\andleftrulename}{\small \land \mathrm{L}}
\newcommand{\andleftrule}{
\AxiomC{$\Gamma , \varphi , \psi \seq \Delta $}
\LeftLabel{$\andleftrulename$}
\UnaryInfC{$\Gamma , \varphi\land\psi \seq \Delta $}
\DisplayProof
}
\newcommand{\impleftrulename}{\small \limp\!\!\mathrm{L}}
\newcommand{\impleftrule}{
\AxiomC{$\Gamma  , \varphi\iimp\psi  \seq \varphi, \Delta $}
\AxiomC{$\Gamma , \varphi\iimp\psi, \psi \seq  \Delta $}
\LeftLabel{$\impleftrulename$}
\BinaryInfC{$\Gamma , \varphi\limp\psi \seq \Delta $}
\DisplayProof
}
\newcommand{\imprightrulename}{\small \limp\!\!\mathrm{R}}
\newcommand{\imprightrule}{
\AxiomC{$\Gamma , \varphi \seq \psi ,  \Delta$}
\AxiomC{$\Gamma  \seq \varphi\iimp\psi ,\Delta$}
\LeftLabel{$\imprightrulename$}
\BinaryInfC{$\Gamma \seq \varphi\limp\psi , \Delta$}
\DisplayProof
}
\newcommand{\steprulename}{\textsc{step}}
\newcommand{\newsteprulename}{\textsc{step}}
\newcommand{\newsteprulenewversion}{
   \begin{tabular}[c]{l@{\extracolsep{2cm}}l}
   \multicolumn{2}{c}{
   \AxiomC{$\mathrm{Prem}_1 
         \quad \cdots \quad 
         \mathrm{Prem}_k 
         \quad 
         \mathrm{Prem}_{k+1} 
         \quad \cdots
         \quad \mathrm{Prem}_{k+n}$}
   \LeftLabel{$\steprulename$}
   \RightLabel{$\dagger$}
   \UnaryInfC{$\Sigma_l, \opset{\nxt}{\Theta} ,\opset{\iimp}{\Gamma} 
         \seq 
        \opset{\iimp}{\Delta},
        \opset{\nxt}{\Phi},
         \Sigma_r$}
   \noLine
   \UnaryInfC{
     \begin{tabular}[c]{lll}
     \\
      $\qquad\qquad
      \mathrm{Prem}_{1 \leq i \leq k}$ 
      & $=$
      & $\Sigma_l, 
     \Theta , 
     \opset{\nxt}{\Theta},
     \opset{\limp}{\Gamma} ,
     \varphi_i \iimp \psi_i , 
     \varphi_i 
     \seq
     \psi_i,
     \opsetmi{\limp}{\Delta}{i},
     \Phi$
   \\[5px]
   $\qquad\qquad
    \mathrm{Prem}_{k+1 \leq i \leq k+n}$
   & $=$
   & $\Sigma_l, 
    \Theta, 
     \opset{\nxt}{\Theta},
    \opset{\limp}{\Gamma},
    \nxt \phi_{i-k} 
    \seq
    \opset{\limp}{\Delta},
    \Phi$
   \end{tabular}
}
\DisplayProof
}
 \\
 \\
  $\opset{\nxt}{\Theta} = \nxtlist{\theta}{j}$
    &
  $\opset{}{\Theta} = \oplist{}{\theta}{1}{j}$
   \\[2px]
  $\opset{\iimp}{\Gamma} = \{\implist{\iimp}{\alpha}{\beta}{1}{l}\}$
      &
  $\opset{\limp}{\Gamma} = \{\implist{\limp}{\alpha}{\beta}{1}{l}\}$
   \\[2px]
  $\opset{\iimp}{\Delta} = \{\implist{\iimp}{\varphi}{\psi}{1}{k}\}$
      &
  $\opset{\limp}{\Delta} = \{\implist{\limp}{\varphi}{\psi}{1}{k}\}$
\\[2px]
  $\opsetmi{\limp}{\Delta}{i} = 
     \opset{\limp}{\Delta}\setminus 
     \{\varphi_i\limp\psi_i\}
  $
   \\[2px]
  $\opset{\nxt}{\Phi} = \nxt\phi_1 , \cdots , \nxt\phi_n$
    &
  $\opset{}{\Phi} = \phi_1 , \cdots , \phi_n $
   \\[5px]
   \multicolumn{2}{l}{
   where $\dagger$ means that the conditions $\condzero$, $\condone$ 
   and $\condtwo$ below must hold
  }
\\[2px]
   \multicolumn{2}{l}{
($\condzero$)   $\opset{\iimp}{\Delta} \cup \opset{\nxt}{\Phi}  \neq \emptyset$
  }
\\[2px]
   \multicolumn{2}{l}{
($\condone$)    
    $\bot \not\in\Sigma_l$ and $\top\not\in\Sigma_r$ and 
    $(\Sigma_l \cup \opset\nxt\Theta \cup \Gamma^\iimp ) \cap 
       (\opset{\iimp}{\Delta} 
        \cup 
        \opset{\nxt}{\Phi} 
        \cup 
        \Sigma_r)  = \emptyset$
   }
   \\[2px]
   \multicolumn{2}{l}{
($\condtwo$) 
    $\Sigma_l$ and $\Sigma_r$ each contain atomic formulae only 
   }
  \\[5px]
   \multicolumn{2}{l}{
   Explanations for the conditions:}
   \\[2px]
   \multicolumn{2}{l}{
   ($\condzero$) there must be at least one $\nxt$- or $\iimp$-formula in the
   succedent of the conclusion}
   \\[2px]
   \multicolumn{2}{l}{
   ($\condone$) none of the rules
       $\botleftrulename, \toprightrulename, \idrulename$ 
       are applicable to the conclusion}
   \\[2px]
   \multicolumn{2}{l}{
   ($\condtwo$) none of the rules $\orleftrulename,, \orrightrulename,
   \andleftrulename, \andrightrulename, \impleftrulename,
   \imprightrulename$ are applicable to the conclusion}
    \end{tabular}
}
\newcommand{\newbranchingsteprulename}{\iimp R}
\newcommand{\newbranchingsteprule}{
\AxiomC{$
 \Sigma_l,
 \Theta,
 \opset{\nxt}{\Theta},
 \opset{\limp}{\Gamma},
 \varphi\iimp\psi,
 \varphi
 \seq
 \psi
$}
\LeftLabel{$\newbranchingsteprulename$}
\RightLabel{$\ddagger$}
\UnaryInfC{$
 \Sigma_l,
 \opset{\nxt}{\Theta},
 \opset{\iimp}{\Gamma}
 \seq
 \varphi\iimp\psi,
 \opset{\iimp}{\Delta},
  \opset{\nxt}{\Phi},
 \Sigma_r
$}
\DisplayProof
}
\newcommand{\nxtrightrulename}{\nxt\mathrm{R}}
\newcommand{\newnxtrightrule}{
\AxiomC{$
 \Sigma_l,
 \Theta,
 \opset{\nxt}{\Theta},
 \opset{\limp}{\Gamma},
 \nxt\psi
 \seq
 \psi
$}
\LeftLabel{$\nxtrightrulename$}
\RightLabel{$\ddagger$}
\UnaryInfC{$
 \Sigma_l,
 \opset{\nxt}{\Theta},
 \opset{\iimp}{\Gamma}
 \seq
 \nxt\psi,
 \opset{\iimp}{\Delta},
 \opset{\nxt}{\Phi},
 \Sigma_r
$}
\DisplayProof
}
\begin{document}
\frontmatter          
\pagestyle{headings}  
\mainmatter
\title{Sequent Calculus in the Topos of Trees}

\titlerunning{Running title}  
%
\author{Ranald Clouston\inst{1} \and Rajeev Gor\'e\inst{2}}
\authorrunning{author list} 
\institute{Department of Computer Science, Aarhus University \\ \email{ranald.clouston@cs.au.dk} \and
  Research School of Computer Science, Australian National University \\ \email{rajeev.gore@anu.edu.au} }

\maketitle              

\begin{abstract}
 Nakano's ``later'' modality, inspired by G\"{o}del-L\"{o}b
 provability logic, has been applied in type systems and program logics
 to capture guarded recursion.
 Birkedal et al modelled this modality via the internal logic of the
 topos of trees. We show that the semantics of the propositional
 fragment of this logic
 can be given by linear converse-well-founded
 intuitionistic Kripke frames, so this logic is a marriage of
 the intuitionistic modal logic KM and the intermediate logic LC.
 We therefore call this logic $\lcnxt$.
 We give a sound and cut-free complete sequent calculus for 
 $\lcnxt$ via
 a strategy that decomposes implication into its static and
 irreflexive components. 
 Our calculus provides
 deterministic and terminating backward proof-search, yields 
 decidability of the logic and the coNP-completeness of its validity
 problem.
 Our calculus and decision procedure
 can be restricted to drop linearity and hence capture KM.
\end{abstract}
%

\section{Introduction}

\emph{Guarded recursion}~\cite{Coquand:Infinite} on an infinite data structure
requires that recursive calls be nested beneath constructors. For
example, a stream of zeros can be defined
with the self-reference guarded by the cons:
\begin{lstlisting}
  zeros = 0 : zeros
\end{lstlisting}
Such equations have \emph{unique} solutions and are \emph{productive}: they
compute arbitrarily large prefixes of the infinite structure in finite time,
a useful property in lazy programming.

Syntactic checks do not always play well with higher-order functions; the insight
of Nakano~\cite{Nakano:Modality} is that guarded recursion can be enforced through
the \emph{type system} via an `approximation modality' inspired by
G\"odel-L\"ob provability logic~\cite{boolos-provability-logic}. We
follow Appel et al~\cite{Appel:Very} and call this modality \emph{later}, and use the
symbol $\nxt$. The meaning of $\nxt\tau$ is roughly `$\tau$ one computation step
later'. Type definitions must have their self-reference guarded by later. For
example streams of integers, which we perhaps expect to be defined as
$Stream \cong \mathbb{Z} \times Stream$, are instead
\[
  Stream \cong \mathbb{Z} \times \nxt Stream
\]
Nakano showed that versions of Curry's fixed-point combinator \textbf{Y}, and Turing's
fixed-point combinator likewise, can be typed by the \emph{strong L\"ob axiom}
(see~\cite{Litak:Constructive})
\begin{equation}\label{eq:sLob}
  (\nxt\tau\limp\tau)\to\tau
\end{equation}
Returning to our example, \textbf{Y} can be applied to the function
\[
  \lambda x.\langle 0,x\rangle:\nxt Stream\to \mathbb{Z}\times \nxt Stream
\]
to define the stream of zeros.

Nakano's modality was popularised by the typing discipline for intermediate
and assembly languages of Appel et al~\cite{Appel:Very}, where for certain
`necessary' types a `L\"ob rule' applies which correlates to the strong L\"ob
axiom~\eqref{eq:sLob}. The modality has since been applied in a wide range of ways;
a non-exhaustive but representative list follows.
As a type constructor, $\nxt$ appears in Rowe's type system for
Featherweight Java~\cite{Rowe:Semantic}, the kind system of the System F extension
FORK~\cite{Pottier:Typed}, and in types for functional reactive
programming~\cite{Krishnaswami:Ultrametric}, with applications to graphical user
interfaces~\cite{Krishnaswami:Semantic}. As a logical
connective, $\nxt$ was married to separation logic in~\cite{Hobor:Oracle},
then to higher-order separation logic in~\cite{Bengtson:Verifying}, and to
\emph{step-indexed} logical relations for reasoning about 
programming languages with LSLR~\cite{Dreyer:Logical}.
Thus Nakano's modality is important in various applications in
computer science.

We have so far been coy on precisely what the logic of later is, beyond positing
that $\nxt$ is a modality obeying the strong L\"ob axiom. Nakano cited G\"odel-L\"ob
provability logic as inspiration, but this is a \emph{classical} modal logic with the
\emph{weak} L\"ob axiom $\square(\square\tau\limp\tau)\limp\square\tau$, whereas
we desire intuitionistic implication and the stronger axiom~\eqref{eq:sLob}. In fact
there does exist a tradition of intuitionistic analogues of 
G\"odel-L\"ob logic~\cite{Litak:Constructive}, 
of which Nakano seemed mainly unaware; we will see
that logic with later can partly be understood through this
tradition. In the computer science literature it has been most common
to leave proof theory and search implicit and fix some
concrete semantics; for example see Appel et al's Kripke semantics
of stores~\cite{Appel:Very}. A more abstract and general model can be given via the
internal logic of the \emph{topos of trees} $\trees$~\cite{Birkedal:First}.
This was shown to generalise several previous models for logic with later, such as
the ultrametric spaces of~\cite{Birkedal:Metric,Krishnaswami:Ultrametric}, and provides
the basis for a rich theory of dependent types. We hence take the internal logic of
$\trees$ as a prominent and useful model of logic with later, in which we can study
proof theory and proof search.

In this paper we look at the propositional-modal core of the internal
logic of $\trees$.  This fragment will be seen to have semantics in
\emph{linear intuitionistic Kripke frames} whose reflexive reduction is
\emph{converse-well-founded}.
Linear intuitionistic frames
are known to be captured by the \emph{intermediate} logic Dummett's
$\lglc$~\cite{chagrov-zakharyaschev-modal-logic};
the validity of the $\lglc$ axiom in the topos of trees was first observed
by Litak~\cite{Litak:Typing}. Intuitionistic
frames with converse-well-founded reflexive reduction are captured by the
intuitionistic modal logic $\lgkm$, first called
$I^\Delta$~\cite{Muravitsky:Logic}. Hence the internal
propositional modal logic of the topos of trees is semantically
exactly their combination, which we call $\lcnxt$
(Litak~\cite[Thm. 50]{Litak:Constructive} has subsequently confirmed this
relationship at the level of Hilbert axioms also).

Our specific contribution is to give a sound and cut-free complete
sequent calculus for $\lcnxt$, and by restriction for $\lgkm$ also, supporting
terminating backwards proof search and hence yielding the decidability
and finite model property of these logics. Our sequent calculus
also establishes the coNP-completeness of deciding validity in
$\lcnxt$.

To our knowledge sequent calculi for intuitionistic G\"odel-L\"ob logics, let alone $\lgkm$
or $\lcnxt$, have not before been investigated, but such proof systems provide a
solid foundation for proving results such as decidability, complexity, and
interpolation, and given an appropriate link between calculus and
semantics can provide explicit, usually finite, counter-models falsifying given non-theorems.

The main technical novelty of our sequent calculus is that we leverage the fact that the
intutionistic accessibility relation is the reflexive closure of the modal relation, by
decomposing implication into a static (classical)
component and a dynamic `irreflexive implication' $\iimp$ that looks forward along the modal relation. In fact, this irreflexive
implication obviates the need for $\nxt$ entirely, as $\nxt\varphi$ is easily seen to be equivalent to $\top\iimp\varphi$. Semantically
the converse of this applies also, as $\varphi\iimp\psi$ is semantically equivalent to $\nxt(\varphi\limp\psi)$%
\footnote{This in turn is equivalent in $\lcnxt$ (but is not in
  $\lgkm$) to $\nxt\varphi\limp\nxt\psi$~\cite[Sec. 3]{Nakano:Modality}.}%
, but the $\iimp$ connective is a necessary part of our calculus. We maintain $\nxt$ as a first-class connective in deference to
the computer science applications and logic traditions from which we draw, but note
that formulae of the form $\nxt(\varphi\limp\psi)$ are common in the literature - see Nakano's $(\limp E)$ rule~\cite{Nakano:Modality}, and even more directly Birkedal and
M{\o}gelberg's $\circledast$ constructor. We therefore suspect that treating $\iimp$
as a first-class connective could be a conceptually fruitful side-benefit of our work.


\section{From the Topos of Trees to Kripke Frames}\label{sec:trees}

In this section we outline the \emph{topos of trees} model and its internal logic, and
show that this logic can be described semantically by conditions on intuitionistic Kripke
frames.
Therefore after this section we discard category theory and proceed with reference to
Kripke frames alone.

The topos of trees, written $\trees$, is the category of presheaves on the first infinite
ordinal $\omega$ (with objects $1,2,\ldots$, rather than starting at $0$, in keeping with
the relevant literature).
Concretely an \emph{object} $A$ is a pair of a family of sets $A_i$
indexed by the positive integers, and a family of \emph{restriction functions}
$\res{A}{i}:A_{i+1}\to A_i$ indexed similarly. An \emph{arrow} $f:A\to B$ is a family
of functions $f_i:A_i\to B_i$ indexed similarly, subject to \emph{naturality}, i.e. all
squares below commute:
\[\xymatrix{
  A_1 \ar[d]_{f_1} & A_2  \ar[l]_{a_1} \ar[d]_{f_2} & A_3  \ar[l]_{a_2}  \ar[d]_{f_3} & \cdots & A_j  \ar[d]_{f_j} & A_{j+1}  \ar[l]_{a_j} \ar[d]_{f_{j+1}} \\
  B_1 & B_2 \ar[l]^{b_1} & B_3 \ar[l]^{b_2} & \cdots & B_j & B_{j+1} \ar[l]^{b_j}
}\]
Two $\trees$-objects are of particular interest: the \emph{terminal object} $1$ has 
singletons as component sets and identities as restriction functions;
the \emph{subobject classifier} $\Omega$ has $\Omega_j=\{0,\ldots,j\}$ and
$\omega_j(k)=min(j,k)$. We regard the positive integers as \emph{worlds} and
functions $x:1\to\Omega$ as \emph{truth values} over these worlds, by
considering $x$ true at $j$ iff $x_j=j$. Such an $x$ is constrained by naturality to have
one of three forms: $x_j=j$ for all $j$ (\emph{true everywhere}); $x_j=0$ for all $j$ (\emph{true nowhere}); or given any positive integer $k$, $x_j$ is $k$ for all $j\geq k$,
and is $j$ for all $j\leq k$ (\emph{becomes true at world $k$, remains true at all lesser
worlds}). As such the truth values can be identified with the set $\mathbb{N}\cup
\{\infty\}$, where $\infty$ captures `true everywhere'.

Formulae of the internal logic of $\trees$ are defined as
\begin{eqnarray*}
  \varphi & ::= &
      p \mid
     \top \mid
     \bot \mid
      \varphi\land\varphi \mid
     \varphi\lor\varphi \mid
     \varphi\limp\varphi \mid
     \varphi\iimp\varphi \mid
     \nxt\varphi
\end{eqnarray*}
where $p\in\atms$ is an atomic formula.
Negation may be defined as usual as $\varphi\limp\bot$.
The connective $\iimp$, read as \emph{irreflexive implication}, is not in Birkedal et
al~\cite{Birkedal:First} but is critical to the sequent calculus of this paper; readers may
view $\iimp$ as a second-class connective generated and then disposed of
by our proof system, or as a novel first-class connective, as they prefer.

Given a map $\eta$ from propositional variables $p\in\atms$ to arrows $\eta(p):1\to\Omega$, and a positive integer $j$, the
Kripke-Joyal forcing semantics for $\trees$ are defined by
\[\begin{array}{ll}
  \eta,j\forces p &\mbox{iff $\eta(p)_j=j$} \\
  \eta,j\forces \top & \mbox{always} \\
  \eta,j\forces \bot & \mbox{never} \\
  \eta,j\forces \varphi\land\psi \quad &
    \mbox{iff }\eta,j\forces\varphi\mbox{ and }\eta,j\forces\psi \\
  \eta,j\forces \varphi\lor\psi \quad &
    \mbox{iff }\eta,j\forces\varphi\mbox{ or }\eta,j\forces\psi \\
  \eta,j\forces \varphi\limp\psi \quad &
    \mbox{iff $\forall k\leq j$. }
    \eta,k\forces\varphi\mbox{ implies }\eta,k\forces\psi \\
  \eta,j\forces \varphi\iimp\psi \quad &
    \mbox{iff $\forall k< j$. }
    \eta,k\forces\varphi\mbox{ implies }\eta,k\forces\psi \\
  \eta,j\forces \later\varphi &
    \mbox{iff $\forall k< j$. }
    \eta,k\forces\varphi
\end{array}\]

A formula $\varphi$ is \emph{valid} if $\eta,j\forces\varphi$ for all $\eta,j$.
Note that $\varphi\iimp\psi$ is equivalent to $\later(\varphi\limp\psi)$, and $\later\varphi$ is equivalent to $\top\iimp\varphi$.
While implication $\limp$ can be seen as a conjunction of static and irreflexive components:
\begin{equation}\label{eq:limp_via_imp}
    j\forces\varphi\limp\psi
      \mbox{ \ iff \ ($j\forces\varphi$ implies $j\forces\psi$) and $j\forces\varphi\iimp\psi$}
\end{equation}
it is not definable from the other connectives, because we have no static (that is,
classical) implication. However our sequent calculus will effectively capture
\eqref{eq:limp_via_imp}.

We now turn to Kripke frame semantics. Kripke semantics for intuitionistic modal
logics are usually defined via bi-relational frames $\langle W, R_\limp,
R_\wbx \rangle$, where $R_\limp$ and $R_\wbx$ are binary relations on $W$, with
certain interaction conditions ensuring that modal formulae persist along the
intuitionistic relation~\cite{wolter-zakharyaschev-iml}. However for $\lgkm$ and $\lcnxt$ the intuitionistic relation is definable in terms of the box relation, and so only
the latter relation need be explicitly given to define a frame:

\begin{definition}\label{def:frame_conds}
A \emph{frame} is a pair $\langle W,\rel\rangle$ where $W$ is a non-empty set and
$\rel$ a binary relation on $W$. A \emph{$\lgkm$-frame} has $\rel$ transitive and
\emph{converse-well-founded}, i.e.
there is no infinite sequence $x_1\rel x_2 \rel x_3 \rel \cdots$.
A \emph{$\lcnxt$-frame} is a $\lgkm$-frame with $\rel$ also \emph{connected}, i.e.
$\forall x,y\in W.~x=y$ or $\rel(x,y)$ or $\rel(y,x)$.
\end{definition}

Converse-well-foundedness implies irreflexivity. Also,
$\lgkm$- and $\lcnxt$-frames may be infinite because non-well-founded chains
$\cdots \rel w_3 \rel w_2 \rel w_1$ are permitted.

Given a binary relation $\rel$, let $\releq$ be its \emph{reflexive closure}. If
$\langle W,\rel \rangle$ is a $\lgkm$-frame then $\langle W,\releq \rangle$ is reflexive
and transitive so provides frame semantics for intuitionistic logic. In fact frames arising
in this way in general satisfy only the theorems
of intuitionistic logic, so $\lgkm$ is conservative over intuitionistic logic. In other words,
the usual propositional connectives are too coarse to detect the converse
well-foundedness of a frame; for that we need $\nxt$ and the strong
L\"ob axiom~\eqref{eq:sLob}. Similarly the reflexive closure of a $\lcnxt$-frame is a
\emph{linear} relation and so gives semantics for the logic $\lglc$, over which $\lcnxt$
is conservative.


A \emph{model} 
$\langle W, \rel, \vartheta \rangle$
consists of a frame
$\langle W, \rel \rangle$ 
and a valuation 
$\vartheta: \atms\mapsto 2^{W}$ 
obeying
\textbf{persistence}:
\[
\mathrm{if~} 
w \in \vartheta(p) 
\mathrm{~and~}
w \rel x
\mathrm{~then~}
x \in \vartheta(p) 
\]
We hence define $\lgkm$- and $\lcnxt$-models by the relevant frame conditions.

We can now define when a $\lgkm$- or $\lcnxt$-model $M=\langle W, \rel, \vartheta \rangle$ makes a formula true at a world $w\in W$, with obvious cases $\top,\bot,\land,\lor$ omitted:
$$\begin{array}{ll}
  M,w\forces p &\mbox{iff $w\in\vartheta(p)$} \\
  M,w\forces \varphi\limp\psi \quad &
    \mbox{iff } \forall x.\,w\releq x\mbox{ and }
    M,x\forces\varphi\mbox{ implies }M,x\forces\psi \\
  M,w\forces \varphi\iimp\psi \quad &
    \mbox{iff } \forall x.\,w\rel x\mbox{ and }
    M,x\forces\varphi\mbox{ implies }M,x\forces\psi \\
  M,w\forces \later\varphi &
    \mbox{iff } \forall x.\,w\rel x\mbox{ implies }
    M,x\forces\varphi
\end{array}$$
Thus $\later$ is the usual modal box. 
As usual for intuitionistic logic, we have a monotonicity lemma, provable
by induction on the formation of $\varphi$:

\begin{lemma}[Monotonicity]
\label{lemma:monotonicity}
If
$M, w \forces \varphi$
and
$w \rel v$ then
$M, v \forces \varphi$.
\end{lemma}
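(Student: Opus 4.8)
The plan is to proceed by structural induction on $\varphi$, the inductive hypothesis being that monotonicity already holds for every proper subformula. The base cases settle themselves: for $\varphi = p$ the claim is precisely the \textbf{persistence} condition imposed on valuations, applied to the given step $w \rel v$; for $\varphi = \top$ it is immediate; and for $\varphi = \bot$ the hypothesis $M, w \forces \bot$ is never satisfied, so the implication holds vacuously.

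For the conjunction and disjunction cases I would unfold the (omitted) forcing clauses and invoke the inductive hypothesis on each immediate subformula. Thus if $M, w \forces \psi \land \chi$ then $M, w \forces \psi$ and $M, w \forces \chi$, whence $M, v \forces \psi$ and $M, v \forces \chi$ by two applications of the hypothesis, giving $M, v \forces \psi \land \chi$; the $\lor$ case is symmetric. No frame conditions beyond those already present are used here.

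The substantive cases are the three forward-looking connectives $\limp$, $\iimp$ and $\later$, where the key observation is that a single $\rel$-step can always be absorbed by \emph{transitivity}, so that the inductive hypothesis is in fact not needed at all. Consider $\varphi = \psi \iimp \chi$: to show $M, v \forces \psi \iimp \chi$ I take any $x$ with $v \rel x$ and $M, x \forces \psi$; since $w \rel v \rel x$, transitivity yields $w \rel x$, and then $M, w \forces \psi \iimp \chi$ delivers $M, x \forces \chi$. The $\later$ case is identical but with the antecedent $M, x \forces \psi$ simply omitted.

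The only case demanding slightly more care—and the one place I anticipate any friction—is $\limp$, because its forcing clause quantifies over the reflexive closure $\releq$ rather than over $\rel$. To establish $M, v \forces \psi \limp \chi$ I take any $x$ with $v \releq x$ and $M, x \forces \psi$, and split $v \releq x$ into the cases $v = x$ and $v \rel x$; composing each with the given $w \rel v$ (directly in the first case, by transitivity in the second) shows $w \rel x$, hence $w \releq x$, so $M, w \forces \psi \limp \chi$ supplies $M, x \forces \chi$ as required. Throughout, the argument uses only transitivity of $\rel$ and persistence for atoms; connectedness and converse-well-foundedness are never invoked, so the proof is uniform for both $\lgkm$- and $\lcnxt$-models.
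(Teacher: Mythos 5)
Your proof is correct and follows exactly the route the paper indicates (induction on the formation of $\varphi$, with persistence handling atoms and transitivity of $\rel$ absorbing the extra step in the $\limp$, $\iimp$ and $\later$ cases); the paper itself only states that the lemma is provable this way without spelling out the cases. Your observation that the inductive hypothesis is not needed for the forward-looking connectives, and that connectedness and converse-well-foundedness play no role, is accurate.
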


Fixing a class of models ($\lgkm$- or $\lcnxt$-),
a formula $\varphi$ is 
\emph{valid}
if for every world $w$ in every model $M$ we have
$M, w \forces \varphi$.
It is easy to observe that the two semantics presented above coincide, given the right choice of frame conditions:
\begin{theorem}
Formula $\varphi$ is valid in the internal logic of $\trees$ iff it is $\lcnxt$-valid.
\end{theorem}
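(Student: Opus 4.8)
The plan is to identify the Kripke--Joyal semantics of $\trees$ with the $\lcnxt$-model semantics on one distinguished frame, and then to argue that this single frame already detects all $\lcnxt$-validities. Concretely, I would first consider the frame $\langle \mathbb{N}^+, \rel\rangle$ whose worlds are the positive integers and where $\rel(j,k)$ holds iff $j > k$. This is easily checked to be a $\lcnxt$-frame: $>$ is transitive, connected (it is a strict linear order), and converse-well-founded since $\mathbb{N}^+$ admits no infinite strictly descending chain.

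Next I would set up the translation between the two kinds of valuation. Given $\eta$, define $\vartheta_\eta(p) = \{\, j \mid \eta(p)_j = j \,\}$. The naturality analysis recalled above shows every $\eta(p)$ is true-everywhere, true-nowhere, or true exactly on some initial segment $\{1,\dots,k\}$; thus $\vartheta_\eta(p)$ is always an $\rel$-downward-closed subset of $\mathbb{N}^+$, which is exactly the persistence condition for $\langle\mathbb{N}^+,\rel\rangle$. Conversely every persistent valuation on this frame assigns each $p$ a downward-closed set, i.e. one of these three shapes, so $\eta \mapsto \vartheta_\eta$ is a bijection onto the persistent valuations. A routine induction on $\varphi$ then shows $\eta, j \forces \varphi$ in $\trees$ iff $\langle\mathbb{N}^+,\rel,\vartheta_\eta\rangle, j \forces \varphi$: the propositional cases are immediate, and the clauses for $\limp$, $\iimp$ and $\later$ match because $j \releq k$ means $k \le j$ and $\rel(j,k)$ means $k < j$, reproducing the $\forall k \le j$ and $\forall k < j$ quantifications of the topos clauses. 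Consequently $\varphi$ is valid in $\trees$ iff it is valid in every model over $\langle\mathbb{N}^+,\rel\rangle$.

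Since $\langle\mathbb{N}^+,\rel\rangle$ is itself a $\lcnxt$-frame, $\lcnxt$-validity immediately implies validity over it, hence topos-validity. For the converse I would argue contrapositively: from an arbitrary $\lcnxt$-model $M$ with $M,w_0\not\forces\varphi$, I produce a countermodel over $\langle\mathbb{N}^+,\rel\rangle$. First pass to the generated submodel on the cone $\{w_0\}\cup\{x\mid \rel(w_0,x)\}$; by connectedness and converse-well-foundedness this is a strict linear order with greatest element $w_0$, and truth of $\varphi$ at $w_0$ is unaffected. The key remaining step is finiteness: I would filtrate this submodel through the finite set of subformulae of $\varphi$. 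Because the frame is linear the quotient is again linearly ordered, and being finite it is automatically transitive and converse-well-founded, so it is a finite $\lcnxt$-frame $w_n\,\rel\,\cdots\,\rel\,w_1$ still refuting $\varphi$. Such a frame is isomorphic to the $\rel$-downward-closed subframe $\{1,\dots,n\}$ of $\langle\mathbb{N}^+,\rel\rangle$ via $w_i\mapsto i$; extending the valuation persistently to the remaining worlds (which lie $\rel$-above $n$ and so cannot affect truth at $n$) yields a topos countermodel, completing the argument.

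The main obstacle I anticipate is this finitization step: a $\lcnxt$-cone may have order type any well-order (e.g.\ $\omega+1$), whereas every cone in $\langle\mathbb{N}^+,\rel\rangle$ is finite, so one genuinely needs a finite-model argument rather than a direct embedding. The delicate points are checking that a transitive filtration preserves the forcing of $\limp$, $\iimp$ and $\later$ in the intuitionistic-modal setting and that persistence survives the quotient; linearity is what keeps the filtrated order total and hence keeps it a $\lcnxt$-frame. If one prefers, this finite model property can instead be read off later from the terminating proof search of the sequent calculus, at the cost of a less self-contained development here.
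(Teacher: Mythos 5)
Your reduction of topos-validity to validity over the single frame $\langle\mathbb{N}^+,>\rangle$ (with $\rel(j,k)$ iff $j>k$) is correct, and the bijection between naturality-constrained arrows $\eta(p):1\to\Omega$ and persistent valuations on that frame is exactly right. Note that the paper offers no proof of this theorem at all --- it declares the coincidence of the two semantics ``easy to observe'' --- so you are doing strictly more work than the authors, and you have correctly isolated where the non-trivial content lies: the direction from topos-validity to $\lcnxt$-validity needs a finite model property, because a cone in a $\lcnxt$-frame can be a well-order of type $\omega+1$ while every cone in $\langle\mathbb{N}^+,>\rangle$ is finite. Your closing remark that the finiteness could instead be read off the terminating proof search is precisely how the paper's own development eventually closes this circle (its finite-model-property theorem is derived from the sequent calculus much later).

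The one genuine gap is in the filtration step. You assert that the finite quotient is ``automatically transitive and converse-well-founded'', but a finite transitive frame is converse-well-founded only if it is irreflexive, and the minimal filtration relation is reflexive at any class containing two distinct worlds $x,y$ with $x\rel y$ and $x\sim y$: then $[x]\rel[x]$, which yields the infinite chain $[x]\rel[x]\rel\cdots$. Such non-singleton classes are unavoidable --- in a cone of order type $\omega+1$ some class must be infinite by pigeonhole --- so as written the quotient need not be a $\lcnxt$-frame. The repair exists but must be made explicit: since each subformula's truth set is persistent and hence an initial segment of the well-ordered cone, the equivalence classes are intervals, and one can define the quotient relation as the strict order between \emph{distinct} classes, which is irreflexive, transitive, connected and finite. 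The price is that the forward direction of the filtration lemma for $\later$ and $\iimp$ must then be re-proved (a failure at $w$ must be witnessed in a strictly earlier class); this can be done by passing to a last refuter in the sense of the paper's Lemma~\ref{last-refuter}, which necessarily lies in a different class from $w$ because it disagrees with $w$ on the formula $\later\psi$ (respectively $\varphi\iimp\psi$) itself. With that adjustment your argument goes through; without it, the quotient step fails.
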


\section{The Sequent Calculus $\seqlcnxt$ for $\lcnxt$}

A \emph{sequent} is an expression of the form
$\Gamma \seq \Delta $
where 
$\Gamma$
and
$\Delta$
are finite, possibly empty, sets of formulae with
$\Gamma$ the \emph{antecedent} and 
$\Delta$ the \emph{succedent}.
We write
$\Gamma, \varphi$
for
$\Gamma \cup \{\varphi\}$.
Our sequents are
``multiple-conclusioned'' since the succedent $\Delta$ is a
finite set rather than a single formula as in ``single-conclusioned''
sequents. 

A sequent \emph{derivation} is a finite tree of sequents where each
internal node is obtained from its parents by instantiating a
rule. The root of a derivation is the
\emph{end-sequent}. A sequent derivation is a \emph{proof} if all
the leaves are zero-premise rules.  A rule may require extra
side-conditions for its (backward) application.

The sequent calculus $\seqlcnxt$ is shown in
Fig.~\ref{fig:seq-rules}, where $\Gamma$, $\Delta$, $\Phi$, $\Theta$, and $\Sigma$,
with superscripts and/or subscripts, are finite, possibly empty, sets of formulae.

\begin{figure}[h!]
  \centering
  \begin{tabular}[c]{l@{\extracolsep{0.5cm}}l}
    \multicolumn{2}{l}{
      $\quad\toprule \qquad\qquad \idrule \qquad\qquad \botrule$}
\\[15px]
    \orleftrule &  \orrightrule
\\[15px]
    \andleftrule  & \andrightrule
\\[15px]
    \impleftrule & \imprightrule
\\[15px]
    \multicolumn{2}{l}{
    \newsteprulenewversion
    }
  \end{tabular}
  \caption{Rules for sequent calculus $\seqlcnxt$}
  \label{fig:seq-rules}
\end{figure}

Rules $\toprightrulename$, $\botleftrulename$, $\idrulename$,
$\orleftrulename$, $\orrightrulename$, $\andleftrulename$,
$\andrightrulename$ are standard for a multiple-conclusioned calculus
for $\lgint$~\cite{troelstra-schwichtenberg-basic}.  Rules
$\impleftrulename$ and $\imprightrulename$ can be seen as branching on
a conjunction of static and an irreflexive
implication: see equation~\eqref{eq:limp_via_imp}.
The occurrence of 
$\varphi\iimp\psi$
in the right premise of $\impleftrulename$ is redundant, since 
$\psi$ implies $\varphi\iimp\psi$, but its presence makes our
termination argument simpler.

The rule $\steprulename$ resembles Sonobe's multi-premise rule for
$\imprightrulename$ in $\lglc$~\cite{Sonobe:Gentzen,Corsi:Semantic},
but its interplay of static and dynamic connectives allows us to
capture the converse-well-foundedness of our frames.  The reader may
like to skip forward to compare it to the rules for $\lgkm$ in
Fig.~\ref{fig:step-rules-logic-KH}, which are simpler because they do
not have to deal with linearity. Condition $\condzero$ is essential
for soundness; $\condone$ and $\condtwo$ are not, but ensure that the
$\newsteprulename$ rule is applicable only if no other rules are
applicable (upwards), which is necessary for semantic invertibility
(Lem.~\ref{lemma-invertibility-steprule}).  Note that the formulae in
$\opset{\nxt}{\Theta}$ appear intact in the antecedent of every
premise. This is not essential as $\Theta$ implies
$\opset{\nxt}{\Theta}$, but will simplify our proof of
completeness. In constrast the formulae in $\opset{\nxt}{\Phi}$ do not
appear in the succedent of any premise. Also, the formulae in
$\Sigma_r$ do not appear in the succedent of any premise. So
$\newsteprulename$ contains two aspects of weakening, but $\condtwo$
ensures this is not done prematurely.

Figs.~\ref{fig:sLob} and~\ref{fig:LC} give example proofs,
using the following derived rule:

\begin{lemma}\label{Lem:MP}
  The Modus Ponens rules $\mathrm{\small \mprulename}$ 
  is derivable in $\seqlcnxt$ as follows:
\end{lemma}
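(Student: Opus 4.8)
The plan is to recognise modus ponens here as a bookkeeping-free instance of the primitive rule $\impleftrulename$. I read the target derived rule as
\begin{prooftree}
\AxiomC{$\Gamma \seq \varphi, \Delta$}
\AxiomC{$\Gamma, \psi \seq \Delta$}
\BinaryInfC{$\Gamma, \varphi\limp\psi \seq \Delta$}
\end{prooftree}
Its conclusion is literally the conclusion of $\impleftrulename$, and its premises differ from those of $\impleftrulename$ only by the absence of the auxiliary formula $\varphi\iimp\psi$ in each antecedent. So the derivation is, in essence, a single application of $\impleftrulename$ to weakened copies of the two supplied premises. It is worth noting that the succedent-oriented reading of modus ponens --- from $\Gamma\seq\varphi,\Delta$ and $\Gamma\seq\varphi\limp\psi,\Delta$ infer $\Gamma\seq\psi,\Delta$ --- is \emph{not} cut-free derivable, being interderivable with a cut on $\varphi\limp\psi$; this is exactly why the natural derived rule in a cut-free calculus is the left-hand form above.

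Concretely, I would first insert $\varphi\iimp\psi$ into the antecedents of the given proofs of $\Gamma\seq\varphi,\Delta$ and $\Gamma,\psi\seq\Delta$, obtaining proofs of $\Gamma,\varphi\iimp\psi\seq\varphi,\Delta$ and $\Gamma,\varphi\iimp\psi,\psi\seq\Delta$, and then apply $\impleftrulename$ to reach $\Gamma,\varphi\limp\psi\seq\Delta$. The paper's own observation that $\varphi\iimp\psi$ is redundant in the right premise, since $\psi$ forces it, confirms that this reintroduction loses nothing. The lemma therefore reduces to admissibility of weakening on the left by $\varphi\iimp\psi$, which I would establish by induction on proof height. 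For the zero-premise rules and the propositional rules $\orleftrulename, \orrightrulename, \andleftrulename, \andrightrulename, \impleftrulename, \imprightrulename$ the added antecedent formula simply travels along the fixed context sets, so these cases are immediate.

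The step I expect to be the main obstacle is pushing the weakening through $\steprulename$, for two reasons. First, an antecedent $\iimp$-formula is not merely copied upward but is rewritten into its static form $\varphi\limp\psi$ in every premise, so weakening the conclusion by $\varphi\iimp\psi$ forces me to weaken each premise by the \emph{different} formula $\varphi\limp\psi$ before appealing to the induction hypothesis. Second, I must check that the side-conditions survive: atomicity $\condtwo$ is unaffected because the non-atomic $\varphi\iimp\psi$ is placed in $\opset{\iimp}{\Gamma}$ rather than in $\Sigma_l$; but the disjointness condition $\condone$ can genuinely fail if $\varphi\iimp\psi$ already occurs in the succedent, in which case $\steprulename$ no longer applies to the weakened sequent. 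That case, however, is benign: when the weakening formula clashes with a succedent formula, the weakened sequent is an instance of $\idrulename$ and is closed outright. Handling this clash is the only real subtlety; once weakening is in place, modus ponens follows by the single application of $\impleftrulename$ described above.
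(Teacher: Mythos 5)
You have guessed the wrong form of the rule, and the guess sends you down a much harder path than the paper takes. The rule $\mprulename$ in this paper is the \emph{zero-premise} derived rule with conclusion $\Gamma,\varphi,\varphi\limp\psi \seq \psi$ (see its uses as a leaf in the example derivations of the strong L\"ob and $\lglc$ axioms): the point is that $\varphi$ itself already sits in the antecedent of the conclusion. Because of that, a single backward application of $\impleftrulename$ yields the premises $\Gamma,\varphi,\varphi\iimp\psi \seq \varphi,\psi$ and $\Gamma,\varphi,\varphi\iimp\psi,\psi \seq \psi$, and \emph{both are literally instances of} $\idrulename$ (on $\varphi$ and on $\psi$ respectively), since $\idrulename$ carries arbitrary side contexts. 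No weakening, no auxiliary lemma, three nodes total. That is the entire proof in the paper.

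Your two-premise reading (from $\Gamma\seq\varphi,\Delta$ and $\Gamma,\psi\seq\Delta$ infer $\Gamma,\varphi\limp\psi\seq\Delta$) is a sensible rule and does subsume the paper's uses, but deriving it forces you through admissibility of left weakening, which the paper never establishes and which you only sketch. The $\steprulename$ case you flag is indeed the crux, and your sketch leaves it open: weakening the conclusion by $\alpha\iimp\beta$ requires weakening each premise by the \emph{different} formula $\alpha\limp\beta$, so the induction must be set up for arbitrary weakening formulae measured by derivation height (not by the formula being added); and in the $\condone$-clash case you must throw away the given derivation and close by $\idrulename$ outright, which means the lemma cannot be stated as a local transformation of the last rule. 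All of this is repairable, but it is a substantial detour whose only purpose is to compensate for having dropped $\varphi$ from the antecedent of the conclusion --- the one feature of the paper's formulation that makes the lemma immediate.
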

\begin{proof}
  \[
  \AxiomC{}
  \RightLabel{\idrulename}
 \UnaryInfC{$\Gamma,\varphi,\varphi\iimp\psi \seq \varphi,\psi$}
 \AxiomC{}
  \RightLabel{\idrulename}
 \UnaryInfC{$\Gamma,\varphi,\varphi\iimp\psi,\psi \seq\psi$}
  \RightLabel{$\impleftrulename$}
 \BinaryInfC{$\Gamma,\varphi,\varphi\limp\psi \seq \psi$}
 \DisplayProof
  \]
\end{proof}


\begin{figure}[t]
  \centering
   \AxiomC{}\RightLabel{\mprulename}
   \UnaryInfC{$(\nxt p \limp p) \limp p, \nxt p \limp p, \nxt p  \seq p$}
   \RightLabel{$\steprulename$}
   \UnaryInfC{$(\nxt p \limp p) \iimp p, \nxt p \iimp p \seq \nxt p , p$}
   \AxiomC{}\RightLabel{\idrulename}
   \UnaryInfC{$(\nxt p \limp p) \iimp p, \nxt p \iimp p, p \seq p$}
   \RightLabel{$\impleftrulename$}
   \BinaryInfC{$(\nxt p \limp p) \iimp p, \nxt p \limp p \seq p$}
   \RightLabel{$\steprulename$}
   \UnaryInfC{$\seq (\nxt p \limp p) \iimp p$}
   \DisplayProof
   \\[15px]
   \AxiomC{}\RightLabel{\mprulename}
   \UnaryInfC{$\nxt p \limp p, \nxt p  \seq p$}
   \RightLabel{$\steprulename$}
   \UnaryInfC{$\nxt p \iimp p \seq \nxt p , p$}
   \AxiomC{}\RightLabel{\idrulename}
   \UnaryInfC{$\nxt p \iimp p, p \seq p$}
   \RightLabel{$\impleftrulename$}
   \BinaryInfC{$\nxt p \limp p \seq p$}
   \AxiomC{$\seq (\nxt p \limp p) \iimp p$}
   \RightLabel{$\imprightrulename$}
   \BinaryInfC{$\seq (\nxt p \limp p) \limp p$}
   \DisplayProof   
  \caption{$\seqlcnxt$ proof of the strong L\"ob axiom}
  \label{fig:sLob}
\end{figure}

\begin{figure}[t]
  \centering
  \def\ScoreOverhang{2.7pt}
  \AxiomC{}\RightLabel{\idrulename}
  \UnaryInfC{$p \iimp q, p , q \seq p, q$}
  \AxiomC{}\RightLabel{\idrulename}
  \UnaryInfC{$p \limp q, p, q \iimp p, q \seq p$}
  \RightLabel{$\steprulename$}
  \UnaryInfC{$p \iimp q, p \seq q , q \iimp p$}
  \RightLabel{$\imprightrulename$}
  \BinaryInfC{$p \iimp q, p \seq q , q \limp p$}
  \AxiomC{$\mbox{Symmetric to left}$}
  \UnaryInfC{$q \iimp p, q \seq  p, p \limp q$}
  \RightLabel{$\steprulename$}
  \BinaryInfC{$\seq p \iimp q , q \iimp p$}
  \DisplayProof
 \\[15px]
  \AxiomC{}\RightLabel{\idrulename}
  \UnaryInfC{$p, q \seq q , p$}
  \AxiomC{}\RightLabel{\idrulename}
  \UnaryInfC{$p, q \iimp p, q \seq p$}
  \RightLabel{$\steprulename$}
  \UnaryInfC{$p \seq q , q \iimp p$}
  \RightLabel{$\imprightrulename$}
  \BinaryInfC{$p \seq q , q \limp p$}
  \AxiomC{}\RightLabel{\idrulename}
  \UnaryInfC{$q, p \iimp q, p \seq q$}
  \RightLabel{$\steprulename$}
  \UnaryInfC{$q \seq p, p \iimp q$}
  \AxiomC{$\seq p \iimp q , q \iimp p$}
  \RightLabel{$\imprightrulename$}
  \BinaryInfC{$\seq p \iimp q , q \limp p$}
  \RightLabel{$\imprightrulename$}
  \BinaryInfC{$\seq p \limp q , q \limp p$}
  \RightLabel{$\orrightrulename$}
  \UnaryInfC{$\seq p \limp q \lor q \limp p$}
  \DisplayProof
  \caption{$\seqlcnxt$ proof of the $\lglc$ axiom}
  \label{fig:LC}
\end{figure}

\subsection{Soundness of $\seqlcnxt$}

Given a world $w$ in some model $M$, and finite sets $\Gamma$ and
$\Delta$ of formulae, we write $w \forces \Gamma$ 
if every formula in $\Gamma$ is true at $w$ in model $M$ 
and write 
$w \not\forces \Delta$
if every formula in $\Delta$ is not true at $w$ in model $M$.

A sequent  $\Gamma \seq \Delta $ 
is \textbf{refutable} if there exists a model $M$ and a world $w$ in that
model such that
$w \forces \Gamma$
and $w \not\forces\Delta$.
A sequent  
is \textbf{valid} if it is not refutable.
A rule is \textbf{sound} 
if some premise is refutable whenever the conclusion is refutable.
A rule is \textbf{semantically invertible} 
if the conclusion is refutable whenever some premise is
refutable. 
Given a model $M$ and a formula $\varphi$, a world $w$ is a 
\textbf{refuter} 
for $\varphi$ if $M, w \not\forces \varphi$. It is a 
\textbf{last refuter} for $\varphi$ if in addition 
$M, w \forces \nxt\varphi$.
An \textbf{eventuality} is a formula of the form 
$\varphi\iimp\psi$  or
$\nxt\varphi$ in the succedent of the conclusion of an
application  of the rule 
$\steprulename$.

\begin{lemma}~\label{last-refuter}
  In every model, every formula $\varphi$ with a refuter has a last refuter.
\end{lemma}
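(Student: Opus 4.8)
The plan is to reduce the statement directly to the converse-well-foundedness of $\rel$. Unfolding the definitions, a world $w$ is a last refuter for $\varphi$ precisely when $M,w\not\forces\varphi$ and every $x$ with $w\rel x$ satisfies $M,x\forces\varphi$; that is, $w$ is a refuter all of whose strict $\rel$-successors force $\varphi$. So a last refuter is nothing other than an $\rel$-maximal refuter, and the task is to show that the set of refuters, which is nonempty by hypothesis, has an $\rel$-maximal element.

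First I would fix a model $M$ and set $S = \{w \mid M,w\not\forces\varphi\}$, which is nonempty since $\varphi$ has a refuter. Then I would argue that $S$ contains an element $w^\ast$ that is $\rel$-maximal in $S$, meaning there is no $y\in S$ with $w^\ast\rel y$. This is the content of converse-well-foundedness: the restriction of $\rel$ to $S$ admits no infinite ascending chain $x_1\rel x_2\rel\cdots$, since such a chain would also be one in $W$, and a relation with no infinite ascending chain on a nonempty set has a maximal element. Finally I would read off that $w^\ast$ is a last refuter: by construction $w^\ast\in S$, so $M,w^\ast\not\forces\varphi$, while maximality says every $y$ with $w^\ast\rel y$ lies outside $S$, i.e.\ $M,y\forces\varphi$; by the forcing clause for $\nxt$ this gives $M,w^\ast\forces\nxt\varphi$, so $w^\ast$ is a refuter with $M,w^\ast\forces\nxt\varphi$.

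The only delicate point is the passage from ``no infinite ascending $\rel$-chain'' (the form in which converse-well-foundedness is stated) to ``every nonempty subset has an $\rel$-maximal element''. Classically these are equivalent, the direction we need going through dependent choice: were $S$ to have no maximal element, every element would have a strict $\rel$-successor inside $S$ and one could extract an infinite ascending chain, contradicting converse-well-foundedness. Since the meta-level reasoning here is classical this is unproblematic, but it is the one place where more than a routine unfolding of definitions is required. I would note in passing that neither transitivity nor connectedness of $\rel$ enters the argument, so the lemma holds uniformly for both $\lgkm$- and $\lcnxt$-frames.
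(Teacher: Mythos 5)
Your proof is correct, and it rests on exactly the same engine as the paper's: converse-well-foundedness of $\rel$, converted (via dependent choice) into the existence of an $\rel$-maximal element. The packaging differs slightly. The paper argues locally: starting from the given refuter $w$ it repeatedly picks a strict successor that still refutes $\varphi$, and converse-well-foundedness guarantees this walk halts at a world all of whose successors (possibly none) force $\varphi$. You argue globally: you take the whole set $S$ of refuters and extract an $\rel$-maximal element of $S$. Both are fine for the lemma as stated, which is a bare existence claim, and you are right that neither transitivity nor connectedness is needed. The one substantive difference is that the paper's local walk produces a last refuter that is $w$ itself or a strict $\rel$-successor of $w$, and it is this stronger form that is actually invoked in the soundness argument for $\steprulename$ (``has a last refuter, which may be $w$ itself''). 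Your global maximal element of $S$ need not be comparable to $w$ in a non-linear ($\lgkm$) frame. This is easily repaired --- run your maximality argument on $S' = \{x \in S : x = w \mbox{ or } w \rel x\}$, which is nonempty and, by transitivity, has the property that any $\rel$-successor in $S$ of one of its elements is again in $S'$ --- but as written your statement proves slightly less than what the paper later uses.
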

\begin{proof}
Suppose $\varphi$ has refuter $w$ in model $M$, i.e.
$M, w \not\forces \varphi$.
If all $\rel$-successors $v$ of $w$ have $v\forces\varphi$ then
$w \forces \nxt\varphi$, and so $w$ is the last refuter we seek.
Else pick any successor $v$ such that $M, v \not\forces \varphi$
and repeat the argument replacing $w$ with $v$.
By converse well-foundedness this can only be done finitely often before
reaching a world with no $\rel$-successors, which vacuously satisfies
$\nxt\varphi$.
\end{proof}


\begin{theorem}[Soundness]\label{thm:soundness}
  If
  $\seq \varphi$
  is $\seqlcnxt$-derivable then 
$\varphi$ is $\lcnxt$-valid.
\end{theorem}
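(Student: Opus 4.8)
The plan is to prove soundness by the standard strategy: show that every rule of $\seqlcnxt$ is sound in the sense just defined (some premise is refutable whenever the conclusion is refutable), and then argue contrapositively. If $\seq\varphi$ is derivable but $\varphi$ were not $\lcnxt$-valid, then the end-sequent $\seq\varphi$ would be refutable; by repeatedly applying rule soundness upward through the derivation tree, we would obtain a refutable leaf. But every leaf is a zero-premise rule ($\toprightrulename$, $\idrulename$, or $\botleftrulename$), and each of these has a valid (non-refutable) conclusion, giving a contradiction. So the whole argument reduces to a rule-by-rule soundness check plus the observation that the axioms are valid.

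**First I would** dispatch the easy cases. For the zero-premise rules, I check directly from the forcing clauses that $\Gamma,\varphi\seq\varphi,\Delta$, $\Gamma,\bot\seq\Delta$, and $\Gamma\seq\top,\Delta$ are all valid: no world can force everything on the left while refuting everything on the right. For the propositional rules $\orleftrulename$, $\orrightrulename$, $\andleftrulename$, $\andrightrulename$ I verify soundness by unfolding the semantics of $\land$ and $\lor$ at a single world; these are entirely local and routine. For $\impleftrulename$ and $\imprightrulename$ the key is equation~\eqref{eq:limp_via_imp}, which decomposes $w\forces\varphi\limp\psi$ into the static conjunct ``$w\forces\varphi$ implies $w\forces\psi$'' together with $w\forces\varphi\iimp\psi$; given a refuter of the conclusion, this decomposition tells me which premise inherits the refutation, using the redundant copy of $\varphi\iimp\psi$ already noted in the text.

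**The hard part will be** the $\steprulename$ rule, whose soundness genuinely uses converse-well-foundedness. Suppose the conclusion $\Sigma_l,\opset{\nxt}{\Theta},\opset{\iimp}{\Gamma}\seq\opset{\iimp}{\Delta},\opset{\nxt}{\Phi},\Sigma_r$ is refutable at some world $w$. By condition $\condzero$ the succedent contains at least one eventuality, an $\iimp$- or $\nxt$-formula, which $w$ refutes. Here I would invoke Lemma~\ref{last-refuter}: each such refuted eventuality, being of the form $\varphi\iimp\psi$ or $\nxt\varphi$ (equivalently $\top\iimp\varphi$), has a \emph{last refuter} $v$, a world with $w\rel v$ at which the relevant antecedent holds but the inner formula fails, yet which forces the corresponding $\nxt$-formula. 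The plan is to choose such a $v$ and verify that it refutes the appropriate premise: the formulae in $\opset{\nxt}{\Theta}$ push their contents $\Theta$ down to $v$ (and $\opset{\nxt}{\Theta}$ itself persists by monotonicity), the irreflexive antecedents $\opset{\iimp}{\Gamma}$ become static antecedents $\opset{\limp}{\Gamma}$ at $v$ because $v$ is an $\rel$-successor of $w$, and the ``last refuter'' property supplies exactly the forcing of $\varphi_i$ and refutation of $\psi_i$ (or of $\phi_{i-k}$) needed to refute that premise.

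**The principal subtlety I anticipate** is choosing \emph{which} premise to refute when several eventualities are simultaneously refuted, and ensuring the single chosen world $v$ simultaneously satisfies all the antecedent obligations of that premise while refuting its succedent. Linearity of $\lcnxt$-frames should be what makes this coherent: since $\rel$ is connected and converse-well-founded, the $\rel$-successors of $w$ are linearly ordered, so I can select $v$ to be a last refuter that is maximal among the refuters in play, thereby guaranteeing that the other succedent formulae $\opsetmi{\limp}{\Delta}{i}$ and $\Phi$ are still refuted at $v$ by monotonicity and by the ordering. I expect the bookkeeping of matching the index $i$ of the chosen eventuality against the corresponding premise $\mathrm{Prem}_i$ to be the most delicate part, but conceptually it is driven entirely by Lemma~\ref{last-refuter} and the frame conditions, with no new ideas required.
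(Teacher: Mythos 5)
Your overall plan coincides with the paper's: rule-by-rule soundness, with the propositional rules handled locally, the two implication rules handled via the decomposition~\eqref{eq:limp_via_imp}, and the $\steprulename$ rule handled by invoking Lemma~\ref{last-refuter} to obtain last refuters among the strict successors of $w$ and then using linearity to select a single world that refutes one premise. Up to that final selection, everything you say matches the paper's argument.

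However, the step you yourself single out as the principal subtlety is resolved in the wrong direction, and as written it fails. You propose to take $v$ to be a last refuter that is \emph{maximal} among the refuters in play. Recall what being a last refuter means: if $u$ is the last refuter of $\phi_j$, then $u \not\forces \phi_j$ but $u \forces \nxt\phi_j$, so \emph{every} strict $\rel$-successor of $u$ forces $\phi_j$. Hence if you choose $v$ maximal, then for any other formula $\phi_j$ whose last refuter $u$ strictly precedes $v$ you get $M,v \forces \phi_j$ --- the exact opposite of the refutation $M,v\not\forces\Phi$ that the premise $\mathrm{Prem}_i$ requires; the same problem arises for the other formulae $\varphi_j\limp\psi_j$ in $\opsetmi{\limp}{\Delta}{i}$. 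Monotonicity (Lemma~\ref{lemma:monotonicity}) propagates \emph{truth} forward along $\rel$, equivalently \emph{falsity} backward, so the correct choice is the opposite one: take $v$ to be the last refuter \emph{closest} to $w$ in the linear order, as the paper does. Then every other last refuter $u$ satisfies $v \releq u$, and from $u\not\forces\chi$ you conclude $v\not\forces\chi$ by the contrapositive of monotonicity, which is exactly what is needed to refute the remaining succedent formulae of the chosen premise. With that single correction (and a slightly more careful two-step passage from the last refuter of $\varphi_i\iimp\psi_i$ or $\nxt\phi_i$, which may be $w$ itself, to a last refuter of $\varphi_i\limp\psi_i$ or $\phi_i$ that is a \emph{strict} successor of $w$), your argument becomes the paper's proof.
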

\begin{proof}
We consider only the non-standard rules.
  \begin{description}
  \item[$\imprightrulename$:] 
  Suppose the conclusion
  $\Gamma \seq \varphi \limp \psi, \Delta $ is refutable
  at $w$ in model $M$.
  Thus some $\releq$-successor $v$ of $w$ refutes
  $\varphi \limp \psi$ via
  $M, v \forces \varphi$ and
  $M, v \not\forces \psi$.
  If
  $v=w$ then
  $w$ refutes the left premise
  $\Gamma, \varphi \seq \psi, \Delta$.
  Else $w \rel v$ and 
  $M, w \not\forces \varphi\iimp\psi$,
  so $w$ refutes the right premise
  $\Gamma \seq \varphi \iimp \psi, \Delta$.
  \item[$\impleftrulename$:] 
  Suppose the conclusion 
  $\Gamma, \varphi\limp\psi \seq \Delta$
  is refuted at $w$. 
  Hence
  $w \forces \Gamma$ and
  $w \forces \varphi\limp\psi$ and
  $w \not\forces \Delta$.
  Thus
  $w \forces \varphi\iimp\psi$.
  If 
  $w \forces \psi$ 
  then 
  $w$ refutes the right premise
  $\Gamma, \varphi\iimp\psi, \psi \seq \Delta$.
  Else
  $w \not\forces \psi$ 
  and so we must have
  $w \not\forces \varphi$
  since we already know that
  $w \forces \varphi\limp\psi$.
  Thus $w$ refutes the left premise
  $\Gamma, \varphi\iimp\psi \seq \varphi, \Delta$.
  
  \item[\rm $\steprulename$:] Assume that
     $\Sigma_l, 
      \opset{\nxt}{\Theta},
      \opset{\iimp}{\Gamma}
      \seq
      \opset{\iimp}{\Delta},
      \opset{\nxt}{\Phi},
      \Sigma_r$
     is refutable.
     That is, there is some model $M$ and some world $w$ such
     that 
    $M, w \forces \Sigma_l$ and
    $M, w \forces \opset{\nxt}{\Theta}$ and
    $M, w \forces \opset{\iimp}{\Gamma}$ 
    but 
    $M, w \not\forces \opset{\iimp}{\Delta}$ 
    and
    $M, w \not\forces \opset{\nxt}{\Phi}$ 
    and
    $M, w \not\forces \Sigma_r$.

    Thus each $\nxt\phi_i \in \opset{\nxt}{\Phi}$
    and
    each $\varphi_i \iimp \psi_i \in \opset{\iimp}{\Delta}$
    has a last
    refuter, which may be $w$ itself.
    But then, 
    each $\phi_i \in \Phi$ and
    each $\varphi_i \limp \psi_i \in \opset{\limp}{\Delta}$,
    has a last
    refuter which 
    is a strict successor of $w$.
    From this set of strict successors of $w$,
    choose the refuter $v$  that is
    closest to $w$ in the linear order.

    Since $w \rel v$, we must have
    $M, v \forces \Sigma_l$ and
    $M, v \forces \Theta$ and
    $M, v \forces \opset{\nxt}{\Theta}$ and
    $M, v \forces \opset{\limp}{\Gamma}$, giving that
    $M, v \forces \Sigma_l, \Theta, \opset{\nxt}{\Theta}, \opset{\limp}{\Gamma}$.

    If $v$ is the last refuter for
    some 
    $\varphi_i \limp \psi_i \in \opset{\limp}{\Delta}$,
    we must have 
    $M, v \forces \varphi_i$ and
    $M, v \not\forces \psi_i$ and
    $M, v \forces \varphi_i\iimp\psi_i$. 
    We must also have 
    $M, v \not\forces \Phi$ since the last refuter for each
    $\phi_i \in \Phi$ cannot strictly precede $v$,
    by our choice of $v$.
    For the same reason, we must have
    $M, v \not\forces \varphi_j\limp\psi_j$ for every
    $1 \leq j \neq i \leq k$, giving
    $M, v \not\forces\opsetmi{\limp}{\Delta}{i}$.
    Thus $v$ refutes the 
    $i^\mathrm{th}$ premise 
    $\mathrm{Prem}_i = 
     \Sigma_l, \opset{\limp}{\Gamma}, \Theta, \opset{\nxt}{\Theta},
     \varphi_i\iimp\psi_i , \varphi_i \seq \psi_i, 
     \opsetmi{\limp}{\Delta}{i}, \Phi
    $.

    If $v$ is the last refuter for
    some
    $\phi_i \in \Phi$,
    we must have both
    $M, v \not\forces \phi_i$ and
    $M, v \forces \nxt\phi_i$.
    Since
    $v$ is the closest last refuter to $w$ in the linear order,
    the last refuters for the other formulae in $\Phi$ cannot strictly
    precede $v$.
    Hence for each $1 \leq j \neq i \leq n$,
    we must have
    $M, v \not\forces \phi_j$
    for each 
    $\phi_j \in \Phi$, hence
    $M, v \not\forces \Phi$.
    Moreover, for the same reason,
    we must have
    $M, v \not\forces \varphi_j \limp \psi_j$,
    where $1 \leq j \leq k$,
    for each 
    $\varphi_j\iimp\psi_j \in \opset{\iimp}{\Delta}$, hence
    $M, v \not\forces \opset{\limp}{\Delta}$.
    That is, $v$ refutes the
    $(k+i)$-th premise
    $\mathrm{Prem}_{k+i} = \Sigma_l, \Theta, 
     \opset{\nxt}{\Theta}, \opset{\limp}{\Gamma},
     \nxt \phi_i \seq \opset{\limp}{\Delta},
     \Phi$.
   \end{description}
\end{proof}


\subsection{Terminating backward proof search}\label{Sec:Proof_Search}

In this section we describe how to systematically find derivations using backward proof
search. To this end, we divide the rules into three sets as follows:
\begin{description}
\item[\rm \emph{Termination Rules:}] the rules 
$\idrulename,
\botleftrulename,
\toprightrulename$ 
\item[\rm \emph{Static Rules:}] the rules
      $\impleftrulename,
      \imprightrulename,
      \orleftrulename,
      \orrightrulename,
      \andleftrulename,
      \andrightrulename$
\item[\rm \emph{Transitional Rule:}] $\steprulename$.
\end{description}

The proof search strategy below starts at the leaf (end-sequent)
  $\Gamma_0 \seq \Delta_0$: \vspace{2mm} 
\\
\textbf{while} some rule is applicable to a leaf sequent \textbf{do} \\
  \hphantom{11}\textbf{stop:} apply any applicable termination rule to that leaf \\
  \hphantom{11}\textbf{saturate:}  else apply any applicable static rule to that leaf \\
  \hphantom{11}\textbf{transition:} else apply the transitional rule to that leaf \vspace{2mm}

The phase where only static
rules are applied is called the \textbf{saturation} phase. The
only non-determinism in our procedure is the choice of static rule
when many static rules are applicable, but as we shall see later, any
choice suffices. Note that conditions $\condone$ and $\condtwo$ actually
force $\newsteprulename$ to have lowest priority.

 Let 
 $sf(\varphi)$ 
 be the set of subformulae of $\varphi$, including 
 $\varphi$ itself and let $m$ be the length of $\varphi$.
 Let 
 $cl(\varphi) = sf(\varphi) 
     \cup 
   \{\psi_1\iimp\psi_2~\mid~
     \psi_1\limp\psi_2 \in sf(\varphi)\}$.

\begin{proposition}\label{prop:saturation-terminates}
  The (backward) saturation phase terminates for any sequent.
\end{proposition}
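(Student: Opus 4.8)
The plan is to show that backward application of the static rules strictly decreases a well-founded measure attached to the leaf sequent, so that no infinite saturation branch is possible.

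First I would fix the right complexity measure. The key observation is that every static rule, read backwards, only introduces formulae drawn from a fixed finite set determined by the end-sequent. Concretely, if the end-sequent is built from the formula(e) in $\Gamma_0 \seq \Delta_0$, then every formula appearing anywhere in the saturation phase lies in $\bigcup cl(\cdot)$ over those formulae — I would verify this rule by rule. The only subtle case is $\impleftrulename$ and $\imprightrulename$: backwards, $\impleftrulename$ replaces $\varphi\limp\psi$ by the strict subformula $\varphi$ (in the succedent, left premise) and by $\psi$ together with $\varphi\iimp\psi$ (right premise), while $\imprightrulename$ replaces $\varphi\limp\psi$ in the succedent by $\varphi$ (antecedent) and $\psi$, or by $\varphi\iimp\psi$. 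The newly generated $\varphi\iimp\psi$ is exactly the formula that the definition of $cl(\varphi)$ was designed to absorb, so the set of formulae that can ever occur is the finite set $\bigcup_{\chi\in\Gamma_0\cup\Delta_0} cl(\chi)$. This is the crucial step, and it is why $cl$ rather than merely $sf$ appears in the statement.

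Given that finiteness, I would define the measure of a leaf sequent $\Gamma\seq\Delta$ as the sum over all non-atomic, non-$\nxt$, non-$\iimp$ formula occurrences in $\Gamma\cup\Delta$ of their logical size, or more carefully a multiset of formula-complexities ordered by the multiset (Dershowitz–Manna) ordering. The point is that each static rule replaces a compound formula ($\varphi\land\psi$, $\varphi\lor\psi$, or $\varphi\limp\psi$) by strictly simpler ingredients: conjunctions and disjunctions break into their immediate subformulae, and implications break into $\varphi$, $\psi$, and $\varphi\iimp\psi$, all of which are strictly smaller than $\varphi\limp\psi$ in logical size (recalling that $\iimp$ is treated as a primitive connective, so $\varphi\iimp\psi$ has size one more than $\max(|\varphi|,|\psi|)$, strictly less than $|\varphi\limp\psi|$ when that is counted as two connectives, or one may simply assign $\iimp$ a weight below $\limp$). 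Because sequents are \emph{sets}, re-deriving a formula already present adds nothing, so the measure is well-defined and each backward static step strictly decreases it in the multiset order.

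I expect the main obstacle to be bookkeeping around the fact that $\impleftrulename$ keeps $\varphi\iimp\psi$ in the antecedent of its right premise and $\imprightrulename$ likewise retains material, so I must confirm that the principal formula $\varphi\limp\psi$ is genuinely removed (it is, since these are set-based sequents and $\limp$-formulae are not reintroduced by any static rule) and that the residual $\iimp$-formulae, being immune to all static rules, cannot themselves trigger further static applications. Once those two facts are in hand — finiteness of the formula pool via $cl$, and strict decrease of the multiset measure — well-foundedness of the multiset ordering over a finite carrier gives termination of every branch, and since each static rule is finitely branching, König's lemma yields termination of the whole saturation phase.
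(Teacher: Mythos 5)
Your argument is correct and is essentially the paper's own proof, which observes in one line that each static rule either removes a connective, removes a formula, or turns $\varphi\limp\psi$ into the inert $\varphi\iimp\psi$; you have merely dressed this up as a decreasing multiset measure (the excursion through $cl$ and K\"onig's lemma is harmless but unnecessary, since the measure decrease alone bounds every branch). The key point — that the residual $\iimp$-formulae admit no further static rule applications — is exactly the observation the paper relies on.
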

\begin{proof}
Each rule either: removes a connective; or removes a formula completely; or replaces a
formula $\varphi\limp\psi$ with $\varphi\iimp\psi$ to which no static rule can be
applied.
\end{proof}

Given our strategy (and condition $\condone$), 
we know  that the conclusion of the
$\steprulename$
rule will never  be an instance of 
$\idrulename$, 
hence 
$\varphi\iimp\psi$  or
$\nxt\varphi$
is only an eventuality when an occurrence of it does not already
appear in the antecedent of the conclusion of the 
$\steprulename$ rule
in question. 
%

\begin{proposition}\label{prop:nonew}
For all rules, the formulae in the premise succedents
are subformulae of formulae in the
conclusion, or are $\limp$-formulae created from $\iimp$-formulae
in the conclusion succedent: we never create new eventualities upwards.
\end{proposition}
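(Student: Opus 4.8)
The plan is to prove the statement by a routine case analysis over the rules of $\seqlcnxt$, inspecting for each rule every formula that occurs in the succedent of a premise and checking that it falls into one of the two advertised categories: either it is a subformula of some formula of the conclusion, or it is an $\limp$-formula $\alpha\limp\beta$ obtained from an $\iimp$-formula $\alpha\iimp\beta$ occurring in the succedent of the conclusion. The three termination rules $\idrulename, \botleftrulename, \toprightrulename$ have no premises, so they are vacuous. The rules $\orleftrulename$, $\andleftrulename$ and the second premise of $\impleftrulename$ leave the succedent $\Delta$ untouched, so every succedent formula is literally a formula of the conclusion. For the remaining static rules the check is immediate: $\orrightrulename$ and $\andrightrulename$ replace a principal succedent formula by its immediate subformulae together with the side context $\Delta$, and the first premise of $\impleftrulename$ adds only $\varphi$, a subformula of the principal formula $\varphi\limp\psi$. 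All of these are covered by the subformula clause.

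Two rules carry the real content. In $\steprulename$ each premise succedent is built from $\opset{\limp}{\Delta}$ (for the premises $\mathrm{Prem}_{k+i}$) or from $\opsetmi{\limp}{\Delta}{i}$ together with a single $\psi_i$ and the set $\Phi$ (for the premises $\mathrm{Prem}_i$). Here each $\varphi_j\limp\psi_j$ is exactly the $\limp$-variant of the eventuality $\varphi_j\iimp\psi_j$ sitting in the conclusion succedent $\opset{\iimp}{\Delta}$, so these fall under the second clause; each $\psi_i$ is a subformula of $\varphi_i\iimp\psi_i$ and each $\phi_j\in\Phi$ is a subformula of $\nxt\phi_j\in\opset{\nxt}{\Phi}$, so these fall under the subformula clause. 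Crucially $\steprulename$ introduces no $\iimp$- or $\nxt$-formula into any premise succedent; it only demotes the $\iimp$-eventualities of its conclusion to their static $\limp$-variants.

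The one case that needs a word of justification is $\imprightrulename$, whose right premise places $\varphi\iimp\psi$ in the succedent while the conclusion carries only $\varphi\limp\psi$. This is the main obstacle, since literally $\varphi\iimp\psi$ is neither a subformula of $\varphi\limp\psi$ nor an $\limp$-formula. I would resolve it by appealing to the closure $cl$: since $\varphi\limp\psi$ is a subformula of a conclusion formula, the definition $cl(\chi)=sf(\chi)\cup\{\psi_1\iimp\psi_2\mid\psi_1\limp\psi_2\in sf(\chi)\}$ already contains its $\iimp$-variant $\varphi\iimp\psi$, so this formula is registered in the closure of the end-sequent rather than being genuinely new. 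Combining the three groups, every $\iimp$- or $\nxt$-formula that can occur in a premise succedent during backward search from $\Gamma_0\seq\Delta_0$ lies in the finite set $cl(\Gamma_0\cup\Delta_0)$: the static rules only copy or decompose such formulae, $\imprightrulename$ produces only $\iimp$-variants of existing $\limp$-subformulae, and $\steprulename$ strictly demotes $\iimp$ to $\limp$. Hence no eventuality outside this fixed finite set is ever created upwards, which is precisely the claim.
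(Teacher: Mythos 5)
Your case analysis is correct, and since the paper asserts this proposition without proof, your rule-by-rule inspection is exactly the argument it leaves implicit. Your treatment of the right premise of $\imprightrulename$ is the one point of substance: the formula $\varphi\iimp\psi$ there is literally neither a subformula of the conclusion nor an $\limp$-formula obtained from a $\iimp$-formula, and your appeal to the closure $cl$ (which adjoins $\iimp$-variants of $\limp$-subformulae) is the right way to read the proposition so that its real content --- every succedent $\iimp$- or $\nxt$-formula arising in backward search lies in the fixed finite set $cl(\Gamma_0\cup\Delta_0)$, so no new eventualities are ever created --- actually holds.
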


\begin{proposition}\label{prop:reduces}
  Any application of the rule $\steprulename$ has strictly fewer eventualities in each
  premise, than in its conclusion.
\end{proposition}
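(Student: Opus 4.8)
The plan is to track, for any sequent, the number of occurrences of the connectives $\iimp$ and $\nxt$ appearing anywhere in its succedent, and to show this number strictly drops across any application of the rule $\steprulename$. I read ``the eventualities in the succedent'' as exactly these occurrences: each occurrence of a subformula of the form $\varphi\iimp\psi$ or $\nxt\varphi$ inside a succedent formula. First I would record the conclusion's contribution. By $\condtwo$ the set $\Sigma_r$ is atomic and so contains no such occurrence, whence the succedent $\iimp$/$\nxt$ occurrences of the conclusion come entirely from $\opset{\iimp}{\Delta}=\{\varphi_1\iimp\psi_1,\dots,\varphi_k\iimp\psi_k\}$ and $\opset{\nxt}{\Phi}=\{\nxt\phi_1,\dots,\nxt\phi_n\}$; and by $\condzero$ we have $k+n\geq 1$.

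The key observation is that every premise dismantles all $k+n$ of these \emph{head} connectives while introducing no fresh $\iimp$/$\nxt$ occurrence into the succedent. Concretely, for a premise $\mathrm{Prem}_i$ with $1\leq i\leq k$ the succedent is $\psi_i,\opsetmi{\limp}{\Delta}{i},\Phi$: the active implication $\varphi_i\iimp\psi_i$ survives in the succedent only as $\psi_i$ (its head $\iimp$, and indeed all of $\varphi_i$, having moved to the antecedent), each remaining $\varphi_j\iimp\psi_j$ reappears as the $\limp$-formula $\varphi_j\limp\psi_j\in\opsetmi{\limp}{\Delta}{i}$ and so loses its head $\iimp$, and every $\nxt\phi_j$ reappears stripped as $\phi_j\in\Phi$ and so loses its head $\nxt$. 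For a premise $\mathrm{Prem}_{k+i}$ with $1\leq i\leq n$ the succedent is $\opset{\limp}{\Delta},\Phi$, so now all $k$ implications drop their head $\iimp$ and all $n$ nexts drop their head $\nxt$. In both cases Proposition~\ref{prop:nonew} guarantees that the surviving succedent formulae are subformulae of conclusion formulae (or $\limp$-for-$\iimp$ rewrites of them), so no new $\iimp$/$\nxt$ occurrence can enter the succedent, and all further material introduced by the rule lands in the antecedent, which the measure ignores.

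Summing over the succedent, each premise therefore contains strictly fewer occurrences of $\iimp$ and $\nxt$ than the conclusion: exactly $k+n$ fewer for a premise $\mathrm{Prem}_{k+i}$, and $k+n$ fewer plus however many such connectives occur inside $\varphi_i$ for a premise $\mathrm{Prem}_i$. Since $\condzero$ forces $k+n\geq 1$, this is a strict decrease in every premise, which is the claim. The step I expect to be the main obstacle — and the reason I would \emph{not} phrase the measure as a plain count of top-level succedent eventualities — is precisely that a residue $\psi_i$ or $\phi_j$ may itself be headed by $\iimp$ or $\nxt$ (for instance $\psi_i=\alpha\iimp\beta$), so that the number of top-level eventualities need not fall at all. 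Counting every $\iimp$/$\nxt$ occurrence (equivalently, comparing the multiset of eventualities under the well-founded proper-subformula order) repairs this, since each dismantled eventuality is replaced only by its proper subformulae.
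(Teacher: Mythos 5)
Your local computation is correct, and your observation that a plain count of top-level succedent eventualities need not drop (e.g.\ when $\psi_i=\alpha\iimp\beta$) is sharp --- it is precisely the difficulty the paper's proof has to get around. But the measure you substitute, the total number of $\iimp$/$\nxt$ occurrences anywhere in the succedent, is not what the paper means by ``eventualities'', and it cannot do the job this proposition exists to do. The proposition feeds the termination theorem and Cor.~\ref{cor:onlylsteprules}, which need a quantity that decreases at each $\steprulename$ application \emph{and does not grow back in between}; your count fails the second half. During the saturation phase the right premise of $\imprightrulename$ turns a succedent $\varphi\limp\psi$ into $\varphi\iimp\psi$, creating a fresh head $\iimp$, and the left premise of $\impleftrulename$ injects $\varphi$ from an antecedent formula $\varphi\limp\psi$ into the succedent, importing every $\iimp$ and $\nxt$ occurring inside $\varphi$. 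So between two consecutive $\steprulename$ applications your measure can increase by more than the $k+n\geq 1$ it loses at the rule, and no bound on the number of $\steprulename$ applications along a branch follows.

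The idea your proof is missing is the one the paper's proof turns on: in each premise the principal eventuality ($\varphi_i\iimp\psi_i$, respectively $\nxt\phi_{i-k}$) crosses into the \emph{antecedent} and persists in every higher antecedent (directly for $\nxt$-formulae, and via the $\limp$-to-$\iimp$ round trip through saturation for implications). By condition $\condone$, a formula sitting in the antecedent of a saturated sequent can never again stand in its succedent without producing an instance of $\idrulename$, so that formula is permanently removed from the pool of formulae that can ever again be eventualities; Prop.~\ref{prop:nonew} caps that pool by the closure of the end-sequent. ``Eventuality'' is thus a top-level succedent formula of a $\steprulename$ conclusion that does not already occur in the antecedent --- a property of the branch above the premise, not a syntactic count on the premise sequent itself --- and the apparent failure of monotonicity you correctly identified is resolved by this $\idrulename$-blocking, not by switching to an occurrence count.
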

\begin{proof}
  For each premise, an
  eventuality $\nxt\varphi$ crosses from the succedent of the
  conclusion to the antecedent of that premise and appears in all
  higher antecedents, or an eventuality $\varphi\iimp\psi$ from the
  succedent of the conclusion turns into $\varphi\limp\psi$ in the
  antecedent of the premise and this $\varphi\limp\psi$ 
  turns back into $\varphi\iimp\psi$ via saturation, meaning that the
  eventuality ($\nxt\varphi$ or $\varphi\iimp\psi$) cannot reappear in
  the succedent of some higher {\em saturated} sequent without creating an instance of $\idrulename$.
\end{proof}

\begin{theorem}
 Backward proof search terminates.
\end{theorem}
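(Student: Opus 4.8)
The plan is to bound every branch of the backward-search tree by combining the three preceding propositions with König's Lemma. First I would observe that the search tree is finitely branching: every rule of $\seqlcnxt$ has a fixed finite number of premises (zero for the termination rules, one or two for the static rules, and $k+n$ for $\steprulename$), so backward application starting from $\Gamma_0 \seq \Delta_0$ generates a finitely branching tree. By König's Lemma such a tree is finite precisely when it contains no infinite branch, so it suffices to show that no branch is infinite.

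Next I would describe the shape of a branch under the search strategy. The termination rules are zero-premise, so they can occur only at leaves; hence a branch is a succession of maximal saturation phases (runs of static-rule applications) separated by single applications of $\steprulename$, possibly ending in a termination rule. Proposition~\ref{prop:saturation-terminates} guarantees that each saturation phase is finite, so an infinite branch would have to contain infinitely many applications of $\steprulename$. The problem therefore reduces to bounding the number of $\steprulename$-applications along a single branch.

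For that bound I would use a measure built from a fixed finite pool of formulae. Since every rule only decomposes formulae into subformulae or exchanges $\limp$ with $\iimp$, all formulae appearing anywhere in the search tree lie in a fixed finite set $C$ obtained by closing the end-sequent under the operation $cl$; in particular only finitely many $\iimp$- and $\nxt$-formulae, the potential eventualities, can ever occur. The decisive invariant --- which is exactly what Propositions~\ref{prop:nonew} and~\ref{prop:reduces} supply --- is that each application of $\steprulename$ absorbs at least one eventuality from the succedent into the antecedent (directly as $\nxt\varphi$, or as $\varphi\limp\psi$ which resaturates to $\varphi\iimp\psi$), where it persists in every higher antecedent. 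By condition $\condone$ the conclusion of $\steprulename$ is never an $\idrulename$-instance, so the absorbed eventuality was genuinely absent from the antecedent below, and once present it can never return to a succedent of a higher saturated sequent without producing an $\idrulename$-instance. As Proposition~\ref{prop:nonew} forbids saturation from introducing eventualities outside $C$, the set of eventualities of $C$ not yet absorbed into the antecedent strictly shrinks with each $\steprulename$-application; being a finite set that cannot grow, it licenses only finitely many such applications. Hence every branch is finite, and by König's Lemma the whole search tree is finite.

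The step I expect to be the main obstacle is making the measure precise. A naive count of $\iimp$- and $\nxt$-formulae in the succedent is not monotone, because saturation can \emph{re-create} an $\iimp$-formula in a succedent via $\imprightrulename$. The clean fix, justified by Propositions~\ref{prop:nonew} and~\ref{prop:reduces}, is to measure instead the eventualities of $C$ that have not yet been absorbed into the antecedent, and to check carefully that antecedent $\iimp$- and $\nxt$-formulae really do persist through both a saturation phase (no static rule decomposes them, and $\imprightrulename$/$\impleftrulename$ keep the $\iimp$-form) and an application of $\steprulename$ (where $\nxt\Theta$ is copied intact and each antecedent $\iimp$-formula survives as the $\limp$-formula that resaturates to it). Establishing this persistence across the alternation of the two phases is the only genuinely delicate bookkeeping.
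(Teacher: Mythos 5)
Your proposal is correct and follows essentially the same route as the paper: each saturation phase terminates by Proposition~\ref{prop:saturation-terminates}, and the number of $\steprulename$ applications per branch is bounded because Propositions~\ref{prop:nonew} and~\ref{prop:reduces} together show that the finite stock of eventualities is never replenished and strictly shrinks at each transition. Your explicit appeal to K\"onig's Lemma and the careful remark that one must count eventualities not yet absorbed into the antecedent (rather than naively counting succedent occurrences) only make explicit what the paper's propositions already encapsulate.
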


\begin{proof}
  By Prop.~\ref{prop:saturation-terminates} each saturation
  phase terminates, so the only way a branch can be infinite is
  via an infinite number of applications of the $\steprulename$
  rule. But by Prop.~\ref{prop:reduces} each such application
  reduces the number of eventualities of the branch, and by
  Prop.~\ref{prop:nonew}, no rule creates new eventualities.
  Thus we must eventually reach a saturated sequent to which no rule
  is applicable, or reach an instance of a termination rule. Either
  way, proof search terminates.
\end{proof}

\begin{proposition}
  Given an end-sequent $\Gamma_0 \seq \Delta_0$, the maximum number of
  different eventualities is the sum of the lengths of the formula in
  $\Gamma_0 \cup \Delta_0$.
\end{proposition}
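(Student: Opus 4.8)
The plan is to show that every formula that can ever serve as an eventuality during backward proof search on $\Gamma_0 \seq \Delta_0$ already lies in the closure $cl(\Gamma_0 \cup \Delta_0) = \bigcup_{\chi \in \Gamma_0 \cup \Delta_0} cl(\chi)$, and then to bound the number of $\nxt$- and $\iimp$-formulae in that finite set by counting connective occurrences formula by formula.

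First I would invoke Prop.~\ref{prop:nonew}: since no rule creates a new eventuality reading upwards, every $\nxt$- or $\iimp$-formula that occurs in the succedent of any conclusion of $\steprulename$ throughout the search is either a subformula of an end-sequent formula, or the $\iimp$-version $\psi_1 \iimp \psi_2$ of some $\limp$-subformula $\psi_1 \limp \psi_2$ of an end-sequent formula. These are exactly the $\nxt$- and $\iimp$-members of $cl(\Gamma_0 \cup \Delta_0)$ — indeed $cl$ was defined to absorb the $\iimp$-versions of $\limp$-subformulae precisely so that saturation's replacement of $\varphi \limp \psi$ by $\varphi \iimp \psi$ cannot escape it. Hence the set of all distinct eventualities is contained in this closure, and it suffices to bound its cardinality.

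Next I would count per formula. Fix $\chi \in \Gamma_0 \cup \Delta_0$ and consider the map sending each occurrence of $\nxt$, $\iimp$, or $\limp$ in $\chi$ to the candidate eventuality it generates: a $\nxt$-occurrence to the $\nxt$-subformula rooted there, an $\iimp$-occurrence to the $\iimp$-subformula rooted there, and a $\limp$-occurrence to the $\iimp$-version of the $\limp$-subformula rooted there. Every candidate eventuality contributed by $\chi$ arises in one of these three ways, so this map is onto; any coincidences (e.g. an $\iimp$-version equalling a genuine $\iimp$-subformula) only lower the count. Therefore the number of distinct eventualities contributed by $\chi$ is at most $n_{\nxt}(\chi) + n_{\iimp}(\chi) + n_{\limp}(\chi)$, the total number of occurrences of these connectives, which is at most the length of $\chi$ since each such occurrence is a distinct symbol of $\chi$.

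Finally, the set of all eventualities is a subset of the union over $\chi \in \Gamma_0 \cup \Delta_0$ of the eventualities contributed by each $\chi$, so its size is at most $\sum_{\chi \in \Gamma_0 \cup \Delta_0} |\chi|$, the claimed sum of lengths. The only delicate point is the first step, namely being certain that the $\iimp$-formulae introduced by saturation and in the premises of $\steprulename$ never leave the closure; this is exactly the content of Prop.~\ref{prop:nonew} and is the reason for the definition of $cl$. The remaining counting is routine and presents no genuine obstacle, and the bound is in general loose (atoms and the connectives $\land,\lor$ contribute to length but not to eventualities), which is harmless for our purposes.
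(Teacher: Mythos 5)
Your proof is correct and takes essentially the same route as the paper's: every eventuality is either a $\nxt$- or $\iimp$-subformula of the end-sequent or the $\iimp$-version of a $\limp$-subformula, and each is accounted for by a distinct connective occurrence, yielding the length bound. You merely make explicit the appeal to Prop.~\ref{prop:nonew} and the occurrence-counting step that the paper's one-line proof leaves implicit.
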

\begin{proof}
  Each eventuality $\nxt\varphi$ is a
  subformula of the end-sequent, and each eventuality
  $\varphi\iimp\psi$ is created from a subformula $\varphi\limp\psi$
  which is also a subformula of the end-sequent or is a subformula of
  the end-sequent.
\end{proof}

\begin{corollary}\label{cor:onlylsteprules}
  Any branch of our proof-search procedure for end-sequent 
  $\Gamma_0\seq\Delta_0$ contains at most $l$ applications of the 
  $\steprulename$ rule, where $l$ is the sum of the lengths of the
  formulae in $\Gamma_0 \cup \Delta_0$. 
\end{corollary}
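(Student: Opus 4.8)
The plan is to combine the three immediately preceding results into a simple monotonicity argument on a single integer measure. For a sequent $S$ occurring on the branch, let $e(S)$ denote the number of eventualities it carries, i.e.\ the number of succedent formulae of the form $\nxt\varphi$ or $\varphi\iimp\psi$. The Proposition immediately preceding this corollary tells us that every such formula is drawn from a fixed finite pool whose size is at most $l$ (the sum of the lengths of the formulae in $\Gamma_0 \cup \Delta_0$), so $e(S) \leq l$ for every $S$ on the branch, and in particular $e(\Gamma_0 \seq \Delta_0) \leq l$.

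Next I would observe that $e$ is non-increasing as we move upward along the branch under the static and termination rules, and strictly decreasing across each application of $\steprulename$. The non-increase is immediate from Proposition~\ref{prop:nonew}: since no rule creates a new eventuality upwards, no static or termination rule can raise the count of succedent eventualities beyond what was already present. The strict decrease across $\steprulename$ is exactly Proposition~\ref{prop:reduces}, which states that each premise of a $\steprulename$ application has strictly fewer eventualities than its conclusion. Crucially, the proof of Proposition~\ref{prop:reduces} shows this decrease is permanent: the eliminated eventuality crosses into the antecedent and, by condition $\condone$, cannot reappear in the succedent of any higher saturated sequent without forming an instance of $\idrulename$, which the strategy forbids.

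Putting these together, along any single branch the quantity $e$ starts at some value at most $l$, never increases during saturation, and drops by at least one at every $\steprulename$ step. Hence the number of $\steprulename$ applications on the branch is bounded above by the initial value $e(\Gamma_0 \seq \Delta_0) \leq l$, as required.

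The main obstacle is not the counting itself but justifying that the strict decrease is genuinely cumulative rather than merely local: a priori one might worry that an eventuality consumed by one $\steprulename$ application could be regenerated further up by saturation and then consumed again, inflating the count. This is precisely why Proposition~\ref{prop:nonew} (no new eventualities upward) must be used in tandem with Proposition~\ref{prop:reduces}: together they guarantee that the finite pool of at most $l$ eventualities is drained monotonically and never replenished, so each of these eventualities can trigger at most one application of $\steprulename$.
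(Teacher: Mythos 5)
Your overall strategy is exactly the paper's: the corollary is left without an explicit proof there, and is meant to follow from the three preceding propositions precisely as you combine them (a pool of at most $l$ distinct eventualities, none created upwards by Prop.~\ref{prop:nonew}, at least one permanently retired by each $\steprulename$ application by Prop.~\ref{prop:reduces}). Your final paragraph states the correct accounting: each of the at most $l$ eventualities can be principal in, and hence trigger, at most one application of $\steprulename$.

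However, the monotone-measure formulation in your middle paragraphs is stated too strongly and, taken literally, gives the wrong bound. First, the per-sequent count $e(S)$ of succedent $\nxt$- and $\iimp$-formulae is \emph{not} non-increasing under the static rules: saturation can expose new succedent eventualities, e.g.\ $\orrightrulename$ applied to $\nxt\alpha\lor\nxt\beta$ adds two, and $\imprightrulename$ manufactures a succedent $\varphi\iimp\psi$ from $\varphi\limp\psi$. Second, and consequently, the bound cannot be ``the initial value $e(\Gamma_0\seq\Delta_0)$'': for the end-sequent $\seq(\nxt p\limp p)\limp p$ of Fig.~\ref{fig:sLob} that initial value is $0$, yet the derivation uses $\steprulename$ three times. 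The quantity that actually decreases is the number of distinct eventualities still \emph{available to the branch} (i.e.\ not yet locked into the antecedent), which starts at most $l$ by the preceding proposition; Prop.~\ref{prop:nonew} guarantees the pool is never enlarged and Prop.~\ref{prop:reduces} (via condition $\condone$ and $\idrulename$) guarantees each $\steprulename$ shrinks it permanently. Replacing $e(\Gamma_0\seq\Delta_0)$ by this pool size $l$ in your final inequality repairs the argument and recovers the stated bound.
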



\subsection{Cut-free Completeness Without Backtracking}

The rules of our sequent calculus,
when used according to conditions $\condzero$, $\condone$, and $\condtwo$,
can be shown to preserve validity upwards
as follows.

\begin{lemma}[Semantic Invertibility]\label{lemma-invertibility-static-rules}
  All static rules are semantically invertible: if some premise is refutable
  then so is the conclusion.
\end{lemma}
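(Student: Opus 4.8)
The plan is to argue by cases on each static rule, showing in every case that a refuter for a premise is also a refuter for the conclusion; since the static rules are exactly $\orleftrulename, \orrightrulename, \andleftrulename, \andrightrulename, \impleftrulename, \imprightrulename$, this is a finite case analysis. For each rule I would fix a model $M$ and a world $w$ refuting the relevant premise, so that $w \forces \Gamma'$ and $w \not\forces \Delta'$ for that premise's antecedent $\Gamma'$ and succedent $\Delta'$, and then verify that the same $w$ (in the same model) refutes the conclusion, i.e.\ forces every antecedent formula of the conclusion and refutes every succedent formula of the conclusion. The conjunction, disjunction, and $\andleftrulename/\orrightrulename$ cases are entirely routine from the forcing clauses for $\land$ and $\lor$, so I would dispatch them quickly.

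The two interesting cases are $\impleftrulename$ and $\imprightrulename$, because they involve the decomposition of $\limp$ into its static and irreflexive parts via \eqref{eq:limp_via_imp}. For $\imprightrulename$, whose premises are $\Gamma,\varphi \seq \psi,\Delta$ and $\Gamma \seq \varphi\iimp\psi,\Delta$ with conclusion $\Gamma \seq \varphi\limp\psi,\Delta$, I would take a refuter $w$ of either premise and show $w \not\forces \varphi\limp\psi$ together with $w \not\forces\Delta$ and $w\forces\Gamma$. If $w$ refutes the left premise, then $w\forces\varphi$ and $w\not\forces\psi$, so by the static half of \eqref{eq:limp_via_imp} we get $w\not\forces\varphi\limp\psi$. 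If $w$ refutes the right premise, then $w\not\forces\varphi\iimp\psi$, so the irreflexive half of \eqref{eq:limp_via_imp} again yields $w\not\forces\varphi\limp\psi$. For $\impleftrulename$ I would run the mirror-image argument on the antecedent: from a refuter of either premise, the presence of $\varphi\limp\psi$ in the conclusion's antecedent is recovered by combining $w\forces\varphi\iimp\psi$ (carried along in both premises) with either $w\not\forces\varphi$ (left premise) or $w\forces\psi$ (right premise), using \eqref{eq:limp_via_imp} read right-to-left.

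The one point requiring care, rather than a genuine obstacle, is bookkeeping the shared context: I must check that the side formulae $\Gamma$ and $\Delta$ that are copied unchanged from premise to conclusion retain their forcing/refutation status, which is immediate since the evaluating world $w$ and the model $M$ do not change between premise and conclusion. In particular, no appeal to Monotonicity (Lemma~\ref{lemma:monotonicity}) or to the frame conditions is needed here, because semantic invertibility of the static rules is a purely local, single-world statement; this is exactly why these rules are the easy ones and why the real work is deferred to the invertibility of $\steprulename$ (Lemma~\ref{lemma-invertibility-steprule}). I therefore expect the proof to be short, with the $\limp$ cases being the only ones worth writing out in full.
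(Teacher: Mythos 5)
Your proposal is correct and follows essentially the same route as the paper: a per-rule case analysis in which the very same world $w$ that refutes a premise is shown to refute the conclusion, with the $\limp$ rules handled via the decomposition \eqref{eq:limp_via_imp}. The only (harmless) difference is that the paper justifies the right-premise case of $\impleftrulename$ by ``$\psi$ implies $\varphi\limp\psi$'' (which implicitly invokes Lemma~\ref{lemma:monotonicity}), whereas you recover $\varphi\limp\psi$ from the copy of $\varphi\iimp\psi$ in the premise, consistent with your observation that the argument can be kept purely local.
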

\begin{proof}
  Again, we consider only the non-standard rules.
  \begin{description}
  \item[$\imprightrulename$:] Suppose the right premise
   $\Gamma \seq \varphi\iimp\psi , \Delta $ is refuted at $w$.
   Then so is the conclusion
   $\Gamma \seq \varphi\limp\psi , \Delta $ 
   since $\varphi\limp\psi$ implies 
   $\varphi\iimp\psi$.

   Suppose that the left premise
   $\Gamma , \varphi \seq \psi , \Delta$ is refutable at $w$.
   Then the conclusion is also refutable at $w$ since $w\not\forces
   \Delta$ and $w \not\forces \varphi\limp\psi$.

  \item[$\impleftrulename$:] Suppose the right premise 
   $\Gamma , \varphi\iimp\psi, \psi \seq \Delta$
   is refuted by $w$. Then so is the conclusion 
   $\Gamma , \varphi\limp\psi \seq \Delta$
   since $\psi$ implies
   $\varphi\limp\psi$.
   Suppose the left premise 
   $\Gamma , \varphi\iimp\psi \seq \varphi, \Delta$
   is refuted by $w$. Since $w \not\forces
   \varphi$ and $w \forces \varphi\iimp\psi$, we must have
   $w \forces \varphi\limp\psi$. But $w \not\forces\Delta$, hence it
   refutes the conclusion.
  \end{description}
\end{proof}




For a given conclusion instance of the $\steprulename$ rule, we have
already seen that conditions $\condzero$, $\condone$ and $\condtwo$
guarantee that there is at least one eventuality in the succedent, that
no termination rule is applicable, that the conclusion is
saturated, and that no eventuality in the succedent of the conclusion
is ignored.

\begin{lemma}\label{lemma-invertibility-steprule}
  The rule
  $\steprulename$ (with $\condzero$, $\condone$ and $\condtwo$)
  is semantically invertible.
\end{lemma}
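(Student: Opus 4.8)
The plan is to prove invertibility by \emph{model surgery}: from a refuter of some premise I build a refuter of the conclusion by prepending a new minimal world. Suppose some premise is refutable, say at a world $v$ in a $\lcnxt$-model $M$. By $\condzero$ there is at least one eventuality, so premises exist, and they come in two shapes: $\mathrm{Prem}_i$ for $1\le i\le k$ (arising from an $\iimp$-eventuality $\varphi_i\iimp\psi_i$) and $\mathrm{Prem}_{k+i}$ for $1\le i\le n$ (arising from a $\nxt$-eventuality $\nxt\phi_i$). In either case, unpacking refutability gives $M,v\forces\Sigma_l,\Theta,\opset{\nxt}{\Theta},\opset{\limp}{\Gamma}$ together with the refutation at $v$ of every succedent formula of that premise.

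First I would pass to the submodel $M_v$ generated by $v$, namely the up-set $\{v\}\cup\{x\mid v\rel x\}$. Since the clauses for $\iimp,\limp$ and $\nxt$ inspect only (strict) $\rel$-successors, truth of every formula at $v$ and above is unchanged, and $M_v$ is again a $\lcnxt$-model because linearity, transitivity and converse-well-foundedness are inherited by up-sets. I would then form $M'$ by adjoining a fresh world $w$ as a global minimum, with $w\rel x$ for every $x$ in $M_v$ (in particular $w\rel v$), and with $w$ satisfying exactly the atoms in $\Sigma_l$. Conditions $\condtwo$ and $\condone$ are what make this valuation legitimate: because $\Sigma_l$ and $\Sigma_r$ are atomic ($\condtwo$) and disjoint with $\bot\notin\Sigma_l$ and $\top\notin\Sigma_r$ ($\condone$), the choice yields $w\forces\Sigma_l$ and $w\not\forces\Sigma_r$, while persistence holds because every atom of $\Sigma_l$ is already true at $v$. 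Adjoining a single minimum keeps the frame linear and converse-well-founded and leaves forcing at the old worlds untouched, since none of them sees $w$.

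The crucial observation is that the strict $\rel$-successors of $w$ in $M'$ are exactly the $\releq$-successors of $v$, i.e.\ $\{v\}\cup\{x\mid v\rel x\}$. This dictionary turns every static fact at $v$ into an irreflexive fact at $w$, and every refuted static fact at $v$ into a refuted irreflexive fact at $w$. From $M,v\forces\alpha_m\limp\beta_m$ I obtain $w\forces\alpha_m\iimp\beta_m$ for each $\alpha_m\iimp\beta_m\in\opset{\iimp}{\Gamma}$; from $v\forces\theta_j$ together with monotonicity (Lem.~\ref{lemma:monotonicity}), which pushes $\theta_j$ to every strict successor of $v$, I obtain $w\forces\nxt\theta_j$. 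Dually, a refuted $\varphi_m\limp\psi_m$ at $v$ has a witness among the $\releq$-successors of $v$, which is a strict successor of $w$, giving $w\not\forces\varphi_m\iimp\psi_m$; and $v\not\forces\phi_m$ makes $v$ itself a witnessing strict successor, giving $w\not\forces\nxt\phi_m$. In the $\mathrm{Prem}_i$ case the distinguished eventuality is handled by $v\forces\varphi_i$ and $v\not\forces\psi_i$, so $w\not\forces\varphi_i\iimp\psi_i$, while the remaining $\opsetmi{\limp}{\Delta}{i}$ are refuted at $v$ and so supply the rest of $\opset{\iimp}{\Delta}$; in the $\mathrm{Prem}_{k+i}$ case the full $\opset{\limp}{\Delta}$ and all of $\Phi$ are refuted at $v$, yielding all of $\opset{\iimp}{\Delta}$ and $\opset{\nxt}{\Phi}$ refuted at $w$. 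Assembling these facts, $w$ forces the whole antecedent and refutes the whole succedent of the conclusion.

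The main obstacle is the simultaneous bookkeeping of step three: verifying that the surgery preserves all $\lcnxt$-frame conditions \emph{and} that every conclusion formula lands on the correct side at $w$. The non-routine point is recognising that prepending a bottom world converts the $\releq$-view from $v$ into the $\rel$-view from $w$ — precisely the static-to-irreflexive shift that $\steprulename$ encodes — so that the $\iimp$-cases of $\opset{\iimp}{\Delta}$ are discharged using refuted $\limp$-formulae at $v$ as witnesses. Passing to the generated submodel $M_v$ before adding $w$ is essential for keeping the frame linear, and the atomicity and disjointness guaranteed by $\condtwo$ and $\condone$ are exactly what permit a consistent valuation at the new world.
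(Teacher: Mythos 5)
Your proposal is correct and follows essentially the same route as the paper's proof: take the generated submodel of the refuting world, prepend a fresh minimum satisfying exactly the atoms of $\Sigma_l$, and use the fact that the strict successors of the new world are the reflexive successors of the old one to convert $\limp$/$\nxt$ facts and refutations at $v$ into $\iimp$/$\nxt$ facts and refutations at $w$. The only difference is presentational — you treat both premise shapes explicitly where the paper details the $1\leq i\leq k$ case and dismisses the other as similar.
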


\begin{proof}
  Suppose some premise is refutable. That is,
  \begin{enumerate}
  \item 
  for some
  $1 \leq i \leq k$
  there exists a model
  $M_1 = 
    \langle 
     W_1,
     \rel_1,
     \vartheta_1
    \rangle$
 and
  $w_1 \in W_1$
  such that
  $M_1, w_1
   \forces 
   \Sigma_l,
   \Theta,
   \opset{\nxt}{\Theta},
   \opset{\limp}{\Gamma},
   \varphi_i \iimp \psi_i,
   \varphi_i    
  $
  and
  $M_1, w_1
  \not\forces 
   \psi_i,
   \opsetmi{\limp}{\Delta}{i},
   \opset{}{\Phi}
  $; or
\item 
  for some
  $k+1 \leq i \leq k+n$
  there exists a model
  $M_2 =
     \langle 
     W_2,
     \rel_2,
     \vartheta_2
    \rangle$
  and
  $w_2 \in W_2$
  such that
  $M_2, w_2
   \forces 
   \Sigma_l,
   \Theta,
   \opset{\nxt}{\Theta},
   \opset{\limp}{\Gamma},
   \nxt\phi_{i-k}
  $
 and $M_2, w_2
  \not\forces 
   \opset{\limp}{\Delta},
   \Phi
  $.
  \end{enumerate}

$1\leq i\leq k$:
We must show there is some model
$M$ containing a world $w_0$ such that 
$M, w_0
 \forces
 \Sigma_l,
 \opset{\nxt}{\Theta},
 \opset{\iimp}{\Gamma}
 $
and
$M, w_0
 \not\forces
 \opset{\iimp}{\Delta},
 \opset{\nxt}{\Phi},
 \Sigma_r
 $.
We do this by taking
the submodel generated by $w_1$, 
adding an extra world $w_0$ as a predecessor of
$w_1$, letting $w_0$ reach every world reachable from $w_1$, and 
setting every member of $\Sigma_l$ to be true at $w_0$.


We formally define $M$ by:
   $W = \{w \in W_1 \mid w_1 \rel_1 w\}
        \cup \{w_0 , w_1\}$;
   $R =  \{(v,w) \in ~\rel_1 ~\mid v \in W, w \in W\}  
                  \cup \{(w_0 , w) \mid  w\in W\setminus\{w_0\}\}$;
   for every atomic  formula $p$ and
   for every $w \in W\setminus\{w_0\}$, let
   $w\in\vartheta(p)$ iff $w\in\vartheta_1(p)$ and put
   $w_0\in\vartheta(p)$ iff $p \in \Sigma_l$.

   By simultaneous induction on the size of any formula
   $\xi$, it follows that for every world $w \neq w_0$ in $W$,
   we have
   $M_1, w \forces \xi$ iff
   $M, w \forces \xi$.

   We have
   $M, w_0 \not\forces \Sigma_r$ by definition
   (since its intersection with $\Sigma_l$ is empty).
   We have
   $M, w_0 \forces \opset{\nxt}{\Theta}$ since
   $M_1, w_1 \forces \Theta$
   implies
   $M, w_1 \forces \Theta$,
   and we know that
   $w_0 \rel w_1$.
   Similarly, we have
   $M, w_0 \forces \opset{\iimp}{\Gamma}$ since
   $w_0 \rel w_1$ and
   $M_1, w_1 \forces \opset{\limp}{\Gamma}$.
   Since
   $M_1, w_1 \forces \varphi_i$ and
   $M_1, w_1 \not\forces \psi_i$, we must have
   $M, w_0 \not\forces \varphi_i \iimp \psi_i$ as desired.
   Together with
   $M_1, w_1 \not\forces \opsetmi{\limp}{\Delta}{i}$,
   we have
   $M, w_0 \not\forces \opset{\iimp}{\Delta}$.
   Finally, since
   $M_1, w_1 \not\forces \Phi$,
   we must have
   $M, w_0 \not\forces \opset{\nxt}{\Phi}$.
   Collecting everything together, we have
   $M, w_0 \forces
     \Sigma_l,
     \opset{\nxt}{\Theta},
     \opset{\iimp}{\Gamma}
   $ 
   and
   $M, w_0 \not\forces
    \opset{\iimp}{\Delta},
    \opset{\nxt}{\Phi},
    \Sigma_r
   $
   as desired.

The case $k+1 \leq i \leq k+n$ follows similarly.
\end{proof}

\begin{theorem}\label{thm:completeness}
  If the sequent
  $\seq \varphi_0$
  is not derivable using the rules
of Fig.~\ref{fig:seq-rules} according to our proof-search strategy
then $\varphi_0$ is not $\lcnxt$-valid.
\end{theorem}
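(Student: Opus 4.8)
The plan is to extract a counter-model from a \emph{failed} proof-search tree, leaning entirely on the fact that every rule is semantically invertible. Since backward proof search terminates and builds a finite tree, and since $\seq \varphi_0$ is assumed underivable, that tree cannot be a proof: at least one leaf is not closed by a termination rule. First I would pin down the shape of such an \emph{open} leaf. By the search strategy it is saturated (the static phase has completed, by Prop.~\ref{prop:saturation-terminates}), none of $\idrulename, \botleftrulename, \toprightrulename$ applies to it, and the transitional rule $\steprulename$ does not apply either. Because saturation removes every $\limp$, $\land$ and $\lor$ occurrence, the residual formulae are atoms, $\iimp$-formulae and $\nxt$-formulae on both sides. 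Consequently the only condition of $\steprulename$ that can fail is $\condzero$, since $\condone$ is given by the non-applicability of the termination rules and $\condtwo$ by saturation. Hence the succedent contains no eventuality at all, and the open leaf has the form $\Sigma_l, \opset{\nxt}{\Theta}, \opset{\iimp}{\Gamma} \seq \Sigma_r$ with $\Sigma_l, \Sigma_r$ sets of atoms satisfying $\Sigma_l \cap \Sigma_r = \emptyset$, $\bot \notin \Sigma_l$ and $\top \notin \Sigma_r$.

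Next I would refute this leaf with a one-point model. Take a single world $w$ with empty accessibility relation $\rel$; this is vacuously transitive, converse-well-founded and (trivially) connected, hence a genuine $\lcnxt$-frame. Set $w \forces p$ exactly for the atoms $p \in \Sigma_l$. Then $w \forces \Sigma_l$ by construction, while every $\nxt$-formula and every $\iimp$-formula in the antecedent is vacuously forced at $w$ because $w$ has no $\rel$-successors; and $w \not\forces \Sigma_r$ because $\Sigma_l$ and $\Sigma_r$ are disjoint. Thus $w$ refutes the open leaf.

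Finally I would propagate refutability down the branch from this leaf to the root. Each rule application on the branch is semantically invertible: the static rules by Lemma~\ref{lemma-invertibility-static-rules}, and $\steprulename$ by Lemma~\ref{lemma-invertibility-steprule}, whose hypotheses $\condzero, \condone, \condtwo$ are precisely those enforced by our strategy at every application of $\steprulename$. Since semantic invertibility states that refutability of a premise forces refutability of the conclusion, a straightforward induction down the finite branch shows that the end-sequent $\seq \varphi_0$ is refutable; that is, some world in some $\lcnxt$-model fails to force $\varphi_0$, so $\varphi_0$ is not $\lcnxt$-valid.

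I expect the main obstacle to be the first step: arguing rigorously that an open leaf must have exactly the stated atoms-plus-antecedent-eventualities shape, i.e.\ that the failure of $\steprulename$ can only be a failure of $\condzero$. This needs the verification that saturation eliminates all $\limp$, $\land$ and $\lor$ occurrences (securing $\condtwo$ and forcing the residual succedent to be atomic once $\condzero$ fails) together with the observation that non-applicability of the termination rules yields $\condone$. Once the shape is fixed, the counter-model and the invertibility-driven descent are routine, since all the semantic content has already been discharged in Lemmas~\ref{lemma-invertibility-static-rules} and~\ref{lemma-invertibility-steprule}; the point of the argument is exactly that invertibility of \emph{every} rule, including $\steprulename$, is what delivers completeness without any backtracking.
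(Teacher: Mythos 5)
Your proposal is correct and follows essentially the same route as the paper's proof: identify an open saturated leaf of the failed search tree, refute it with a single-world model whose relation is empty (making the antecedent's $\nxt$- and $\iimp$-formulae vacuously true and using disjointness of $\Sigma_l$ and $\Sigma_r$), and push refutability down to the end-sequent via the invertibility lemmas. Your additional care in arguing that the only way $\steprulename$ can fail to apply at such a leaf is via $\condzero$ is a point the paper leaves implicit, but it is the same argument.
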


\begin{proof}
  Suppose $\seq \varphi_0$ is not derivable using our
  systematic backward proof search procedure. Thus our procedure gives
  a finite tree with at least one leaf
  $\Sigma_l,  \opset{\iimp}{\Gamma}, \opset{\nxt}{\Theta} \seq \Sigma_r$ 
  obeying both $\condone$ and $\condtwo$ to which no rules are applicable.


  Construct $M_0 = \langle W_0, \rel_0, \vartheta_0 \rangle$ as follows:
  let $W_0 = \{w_0\}$;
  let $\rel_0 = \emptyset$; and
  $w_0\in\vartheta_0(p)$ iff $p\in\Sigma_l$.
  Clearly, we have
  $M_0, w_0 \forces \Sigma_l$ by definition.
  Also,
  $M_0, w_0 \not\forces \Sigma_r$ since its intersection with 
  $\Sigma_l$ is empty by $\condone$. Every formula
  $\alpha\iimp\beta \in \opset{\iimp}{\Gamma}$
  and $\nxt\theta \in \opset{\nxt}{\Theta}$
 is vacuously true
  at $w_0$ in $M_0$ since $w_0$ has no strict successors.
 Thus the leaf sequent
  $\Sigma_l, \opset{\iimp}{\Gamma}, 
    \opset{\nxt}{\Theta} \seq 
   \Sigma_r
  $
  is refuted by $w_0$ in model $M_0$.
  The Invertibility Lemmas~\ref{lemma-invertibility-static-rules}
  and~\ref{lemma-invertibility-steprule} now
  imply that $\seq \varphi_0$ is 
  refutable in some $\lcnxt$-model.


%
%
%
\end{proof}

\begin{corollary}[Completeness]
\label{cor:completeness}
  If 
  $\varphi$ 
  is $\lcnxt$-valid then
  $\seq \varphi$
  is $\seqlcnxt$-derivable.
\end{corollary}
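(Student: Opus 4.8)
The plan is to read off the corollary as the contrapositive of Theorem~\ref{thm:completeness}, with termination supplying the dichotomy that makes that contrapositive meaningful. First I would invoke the termination of backward proof search: for any input $\seq\varphi$ our systematic procedure halts with a finite tree. When it halts, no rule is applicable to any leaf, so each leaf is either a closed instance of a zero-premise termination rule ($\idrulename$, $\botleftrulename$, or $\toprightrulename$) or an open, saturated sequent obeying $\condone$ and $\condtwo$ to which genuinely no rule applies. The tree is thus a $\seqlcnxt$-proof precisely when it has no open leaf.

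Next I would appeal directly to Theorem~\ref{thm:completeness}, which dispatches the open-leaf case: any such leaf is refuted by the one-world model $M_0$, and the Invertibility Lemmas~\ref{lemma-invertibility-static-rules} and~\ref{lemma-invertibility-steprule} push this refutation down to the end-sequent, so $\varphi$ is not $\lcnxt$-valid. Contraposing, if $\varphi$ \emph{is} $\lcnxt$-valid then the terminated search cannot contain an open leaf, whence by the dichotomy above it must be a proof. To finish I would close the trivial gap between strategy and calculus: a tree produced by our proof-search strategy is built solely from the rules of Fig.~\ref{fig:seq-rules}, the strategy merely fixing the order in which they are tried, so such a tree is a fortiori a $\seqlcnxt$-derivation. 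Hence $\seq\varphi$ is $\seqlcnxt$-derivable, as required.

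Since all the substantive work---termination, the refuting model $M_0$, and the semantic invertibility of every rule---is already in hand, I do not expect a genuine obstacle; the corollary is essentially a repackaging of Theorem~\ref{thm:completeness}. The only point needing care is the termination dichotomy itself: one must confirm that `search halts with no open leaf' really does coincide with `the tree is a proof', i.e. that a halted branch is either closed by a zero-premise rule or blocked with no rule applicable, with no third possibility. Given our explicit strategy and condition $\condone$, this is immediate.
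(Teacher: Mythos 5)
Your proposal is correct and matches the paper, which states this corollary without further proof as the immediate contrapositive of Theorem~\ref{thm:completeness} (non-derivability under the terminating search strategy implies non-validity). Your only addition is to spell out the termination dichotomy and the observation that a strategy-produced closed tree is a $\seqlcnxt$-derivation, both of which the paper leaves implicit.
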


Cor.~\ref{cor:completeness} guarantees that any sound rule
can be added to our calculus
without increasing the set of provable end-sequents,
including both forms of cut below:
\[
  \AxiomC{$\Gamma \seq \varphi, \Delta$}
  \AxiomC{$\Gamma, \varphi \seq \Delta$}
  \BinaryInfC{$\Gamma \seq \Delta$}
  \DisplayProof 
\qquad
  \AxiomC{$\Gamma, \seq \varphi, \Delta$}
  \AxiomC{$\Gamma', \varphi \seq \Delta'$}
  \BinaryInfC{$\Gamma, \Gamma' \seq \Delta, \Delta'$}
  \DisplayProof 
\]


Since all static rules are semantically invertible, any order of rule
applications for saturation suffices. Since all rules are invertible
we never need backtracking. That is, our strategy straightfowardly yields a
\emph{decision procedure}. It also tells us that $\lcnxt$, like its parent logics $\lgkm$ and $\lglc$, enjoys the finite model property:



\begin{theorem}\label{thm-fmp}
  If $\varphi$ is not $\lcnxt$-valid then 
  it is refutable in a rooted (finite) $\lcnxt$-model
  of length at most $l+1$ where $l$ is the length of $\varphi$.
\end{theorem}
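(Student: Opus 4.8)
The plan is to read off the length bound directly from the countermodel construction already performed in the proof of Theorem~\ref{thm:completeness}, by tracking how many worlds are introduced along the way. First I would invoke the contrapositive of Corollary~\ref{cor:completeness}: if $\varphi$ is not $\lcnxt$-valid then $\seq\varphi$ is not $\seqlcnxt$-derivable under our proof-search strategy. Hence backward proof search yields a finite tree containing at least one saturated leaf to which no rule applies, and the refutation of $\seq\varphi$ is obtained by propagating a refuting model down a single branch from that leaf to the root, applying the Invertibility Lemmas~\ref{lemma-invertibility-static-rules} and~\ref{lemma-invertibility-steprule} at each step.

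The heart of the argument is a world-counting invariant maintained along this branch. I would show that at each sequent the refuting model is a finite $\lcnxt$-model rooted at its active world (so, by connectedness, a single $\rel$-chain with that world as minimum), and that the length of this chain equals one plus the number of $\steprulename$ applications strictly above the sequent on the branch. The base case is the leaf, whose model $M_0$ is a single world, giving chain length $1$. For the inductive step there are two cases. A static rule leaves the carrier set and active world unchanged, since Lemma~\ref{lemma-invertibility-static-rules} refutes the conclusion at the very same world as the relevant premise; hence it introduces no new world and does not lengthen the chain. The transitional rule $\steprulename$ is governed by Lemma~\ref{lemma-invertibility-steprule}, whose construction takes the submodel generated by the premise's active world $w_1$ and adjoins exactly one new predecessor $w_0$ as a fresh minimum reaching every other world; this adds precisely one world and increases the chain length by exactly one, while preserving linearity and converse-well-foundedness (so the result is again a rooted finite $\lcnxt$-model).

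Combining these observations, the model obtained at the root has chain length equal to one plus the total number of $\steprulename$ applications along the chosen branch. By Corollary~\ref{cor:onlylsteprules} that number is at most $l$, the sum of the lengths of the formulae in the end-sequent, which for $\seq\varphi$ is simply the length of $\varphi$. Therefore $\varphi$ is refuted in a rooted finite $\lcnxt$-model of length at most $l+1$, which is exactly the claim.

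The main obstacle I anticipate is making the notion of model ``length'' precise enough that the count is tight: one must confirm that only $\steprulename$ ever contributes to the longest $\rel$-chain, that every static rule is strictly world-preserving, and that the ``submodel generated by $w_1$'' step inside Lemma~\ref{lemma-invertibility-steprule} never lengthens the chain beyond adding the single new root. This is precisely where the invariant that each intermediate model is generated by its active world does the real work, since it rules out any spurious worlds accumulating between successive applications of $\steprulename$ and guarantees that the generated-submodel step is vacuous at every stage.
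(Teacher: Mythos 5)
Your proposal is correct and takes essentially the same approach as the paper: both read the $l+1$ bound directly off the countermodel built in the completeness proof, observing that only the transitional rule $\steprulename$ ever adds a world (exactly one, as a new root below the current one) and that Corollary~\ref{cor:onlylsteprules} caps the number of its applications on a branch at $l$. One small citation slip: the implication ``not $\lcnxt$-valid implies not derivable'' is the contrapositive of soundness (Theorem~\ref{thm:soundness}), not of Corollary~\ref{cor:completeness}, and the paper invokes soundness at this point.
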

\begin{proof}
  Suppose that $\varphi$ is not valid: that is,
  $\varphi$ is refuted by some world in some $\lcnxt$ model. By
  soundness Thm.~\ref{thm:soundness} $\seq \varphi$ is
  not derivable using our proof-search strategy. In particular, in any
  branch, there can be at most $l$ applications of the rule
  $\steprulename$ by Cor.~\ref{cor:onlylsteprules}. From such a
  branch, completeness Thm.~\ref{thm:completeness} allows us to
  construct a model $M$ and a world $w$ which refutes $\varphi$. But
  the model $M$ we constuct in the completeness proof is a rooted
  (finite) $\lcnxt$-model with at most $l+1$ worlds since the only
  rule that creates new worlds is the (transitional) $\steprulename$
  rule and there are at most $l$ such rule applications in any branch.

\end{proof}

\begin{corollary}
  $ \lcnxt$ has the finite model property.
\end{corollary}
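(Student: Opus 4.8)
The plan is to derive this immediately from Theorem~\ref{thm-fmp}. Recall that a logic enjoys the \emph{finite model property} precisely when every non-valid formula is refuted in some \emph{finite} model of the appropriate class. Theorem~\ref{thm-fmp} asserts exactly this for $\lcnxt$: any $\varphi$ that fails to be $\lcnxt$-valid is refutable in a rooted finite $\lcnxt$-model, with the additional quantitative information that this model may be taken to have at most $l+1$ worlds, where $l$ is the length of $\varphi$. Discarding the bound on the number of worlds yields the qualitative statement of the finite model property verbatim.

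Concretely, I would argue as follows. Suppose $\varphi$ is not $\lcnxt$-valid. Applying Theorem~\ref{thm-fmp} produces a finite $\lcnxt$-model together with a world in it that refutes $\varphi$; this is a finite counter-model, so $\lcnxt$ has the finite model property. No further construction is required.

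There is no genuine obstacle at this stage, since all of the real work has already been carried out upstream. The termination of backward proof search bounds the number of $\steprulename$ applications on any branch (Cor.~\ref{cor:onlylsteprules}), and the completeness construction of Theorem~\ref{thm:completeness} builds the counter-model by adjoining exactly one fresh world per $\steprulename$ application, so the resulting refuting model is automatically finite and rooted. Thus the only content of the corollary is the observation that ``a finite counter-model exists for every non-theorem'' is, by definition, the finite model property; a self-contained proof would simply re-invoke these two ingredients rather than introduce anything new.
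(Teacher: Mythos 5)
Your proposal is correct and matches the paper's intent exactly: the corollary is stated as an immediate consequence of Theorem~\ref{thm-fmp}, obtained by discarding the quantitative bound of $l+1$ worlds. Nothing further is needed.
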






\subsection{Complexity}

We first embed classical propositional logic into 
$\lcnxt$.

\begin{lemma}\label{lemma:double-negation}
  If
  $\varphi$
  is a formula built out of atomic formulae,
  $\top$ and
  $\bot$ using only the connectives
  $\land$,
  $\lor$,
  $\limp$, and 
  the sequent
  $\seq (\varphi \limp \bot) \limp \bot$
  is derivable, then 
  $\varphi$ is a tautology of classical propositional logic.
\end{lemma}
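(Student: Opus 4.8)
The plan is to argue the contrapositive at the semantic level, using soundness as the bridge. Since $\seq(\varphi\limp\bot)\limp\bot$ being derivable implies, by Thm.~\ref{thm:soundness}, that $(\varphi\limp\bot)\limp\bot$ is $\lcnxt$-valid, it suffices to show that whenever $\varphi$ is \emph{not} a classical tautology its double negation $(\varphi\limp\bot)\limp\bot$ fails in some $\lcnxt$-model. So assume a classical valuation $V:\atms\to\{\mathrm{true},\mathrm{false}\}$ falsifies $\varphi$.

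The key device is a \emph{single-world} model. I would take $M=\langle W,\rel,\vartheta\rangle$ with $W=\{w\}$, $\rel=\emptyset$, and $w\in\vartheta(p)$ iff $V(p)=\mathrm{true}$. The empty relation is trivially transitive and converse-well-founded, and is connected since the disjunct $x=y$ of connectedness holds on a singleton; persistence is vacuous. Hence $M$ is a legitimate $\lcnxt$-model. Because $\rel=\emptyset$, the reflexive closure $\releq$ restricts on $W$ to the identity, so the only world $\releq$-reachable from $w$ is $w$ itself.

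The heart of the argument is a routine induction on the structure of any formula $\psi$ built from atoms, $\top$ and $\bot$ using only $\land,\lor,\limp$, showing that $M,w\forces\psi$ holds exactly when $\psi$ evaluates to $\mathrm{true}$ under $V$ classically. The atomic, $\top$, $\bot$, $\land$ and $\lor$ cases are immediate from the definitions. For implication, $M,w\forces\alpha\limp\beta$ means that for every $x$ with $w\releq x$ we have $M,x\forces\alpha$ implies $M,x\forces\beta$; since the only such $x$ is $w$, this reduces to ``$M,w\forces\alpha$ implies $M,w\forces\beta$'', which by the induction hypothesis matches the classical truth table under $V$. It is precisely here that the restriction to $\land,\lor,\limp$ is used: $\iimp$ and $\nxt$ are vacuously true at $w$ irrespective of $V$, so they would not track classical truth, and the embedding would break.

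Applying this coincidence to $\varphi$ yields $M,w\not\forces\varphi$, as $V$ falsifies $\varphi$. Consequently $M,w\forces\varphi\limp\bot$ (its antecedent fails at the only reachable world $w$), while $M,w\not\forces\bot$, so $M,w\not\forces(\varphi\limp\bot)\limp\bot$. Thus the double negation is refuted in an $\lcnxt$-model and is therefore not $\lcnxt$-valid, whence by soundness $\seq(\varphi\limp\bot)\limp\bot$ is not derivable --- exactly the contrapositive of the claim. The only substantive step is the single-world coincidence induction, and it is entirely routine, so I anticipate no genuine obstacle.
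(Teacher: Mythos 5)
Your proof is correct, but it takes a different route from the paper's. You argue the contrapositive entirely semantically: assuming a classical valuation falsifies $\varphi$, you build a one-point $\lcnxt$-model with empty $\rel$, prove by induction that forcing at its sole world coincides with classical evaluation for $\{\land,\lor,\limp\}$-formulae, refute $(\varphi\limp\bot)\limp\bot$ there, and conclude non-derivability from soundness (Thm.~\ref{thm:soundness}) alone. The paper instead works forwards: it observes that any derivation produced by the proof-search strategy must end with $\imprightrulename$ followed by $\impleftrulename$, extracts the derivable premise $\varphi\iimp\bot\seq\varphi,\bot$, applies soundness to \emph{that} sequent, and evaluates it on single-pointed models, where $\varphi\iimp\bot$ is vacuously true and $\bot$ is false, forcing $\varphi$ to hold. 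Both arguments pivot on the same key observation --- a single irreflexive point is exactly a classical valuation --- but yours needs only soundness and no analysis of the shape of derivations, whereas the paper's step ``any derivation ends as\ldots'' implicitly leans on the completeness/invertibility machinery to reduce arbitrary derivability to derivability via the strategy. Your version is the more self-contained of the two; the paper's has the minor advantage of exhibiting explicitly which subsequent carries the classical content.
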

\begin{proof}
Any derivation in our systematic proof search
procedure ends as:
\begin{center}
\AxiomC{$\varphi \iimp \bot \seq \varphi, \bot$}
\AxiomC{$\cdots$}
\RightLabel{$\impleftrulename$}
\BinaryInfC{$\varphi \limp \bot \seq \bot$}
\AxiomC{$\cdots$}
\RightLabel{$\imprightrulename$}
\BinaryInfC{$\seq (\varphi \limp \bot) \limp \bot$}
\DisplayProof
\end{center}

Thus, the sequent 
$\varphi \iimp \bot \seq \varphi, \bot$
is derivable.

Soundness Thm.~\ref{thm:soundness} then implies that
this sequent is valid on all models. In particular, it is  valid on
the class of single-pointed models 
$M = \langle W, \rel, \vartheta \rangle$
where 
$W = \{w_0\}$ and
$\rel = \emptyset$.
The formula
$\varphi\iimp\bot$
is true at
$w_0$ vacuously since $w_0$ has no 
$\rel$-successor.
The formula
$\bot$
is not true in any model, including this one,
hence
$M, w_0 \not\forces \bot$.
Thus 
$M, w_0 \forces \varphi$.
That is, 
$\varphi$ itself is valid on all single-pointed models.
But such a model is just a valuation of classical propositional logic.
\end{proof}

\begin{lemma}\label{lemma:double-negation-reverse}
  If
  $\varphi$
  is a formula built out of atomic formulae,
  $\top$ and
  $\bot$ using only the connectives
  $\land$,
  $\lor$,
  $\limp$, and 
  the sequent
  $\seq (\varphi \limp \bot) \limp \bot$
  is not derivable, then 
  $\varphi$ is not a tautology of classical propositional logic.
\end{lemma}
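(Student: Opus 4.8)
The plan is to prove the contrapositive: assuming $\varphi$ is a tautology of classical propositional logic, I show that $\seq (\varphi\limp\bot)\limp\bot$ is derivable. By the Completeness Corollary~\ref{cor:completeness} it suffices to establish that $(\varphi\limp\bot)\limp\bot$ is $\lcnxt$-valid, and I verify this directly from the Kripke semantics rather than by building a derivation by hand.

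Fix any $\lcnxt$-model $M=\langle W,\rel,\vartheta\rangle$ and world $w$. Unfolding the forcing clause for $\limp$ twice, and using that $\bot$ is forced nowhere, $M,w\forces(\varphi\limp\bot)\limp\bot$ holds exactly when for every $x$ with $w\releq x$ there exists $y$ with $x\releq y$ and $M,y\forces\varphi$. The crucial point is that converse-well-foundedness supplies such a witness: the up-set $\{y\mid x\releq y\}$ is non-empty (it contains $x$), so it has an $\rel$-maximal element $m$; since any strict $\rel$-successor of $m$ would, by transitivity, again lie in this up-set, maximality forces $m$ to have no strict $\rel$-successors at all. Thus $m$ is an endpoint of the frame with $x\releq m$.

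It then remains to show $M,m\forces\varphi$ at such an endpoint, and this is precisely where the hypothesis that $\varphi$ is built only from atoms, $\top$, $\bot$, and $\land,\lor,\limp$ is used. Because $m$ has no strict successors, the only world related to $m$ by $\releq$ is $m$ itself, so the clause for $\limp$ collapses to ``$M,m\forces\alpha$ implies $M,m\forces\beta$'', while the clauses for $\land,\lor,\top,\bot$ are already local. Hence the assignment $p\mapsto[M,m\forces p]$ is an ordinary classical valuation, and a routine induction on the structure of the (modality-free) formula $\varphi$ shows that $M,m\forces\psi$ iff this valuation satisfies $\psi$, for every subformula $\psi$. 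As $\varphi$ is a tautology, the valuation satisfies it, giving $M,m\forces\varphi$ and hence the witness $y=m$ we needed. This proves $\lcnxt$-validity, and Completeness discharges the derivability claim.

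The only delicate step is the extraction of the endpoint $m$: one must check that $\rel$-maximality \emph{within the up-set} $\{y\mid x\releq y\}$ really guarantees $m$ has no strict successors \emph{anywhere}, which relies on transitivity of $\rel$ together with converse-well-foundedness (linearity of $\lcnxt$-frames is not even required). Everything else is the standard observation that intuitionistic forcing at a dead end coincides with a classical truth assignment, so I expect no further obstacles.
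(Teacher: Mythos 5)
Your proof is correct. The paper argues in the direct rather than contrapositive direction: from non-derivability it invokes Theorem~\ref{thm:completeness} to obtain a concrete \emph{finite} linear countermodel for $(\varphi \limp \bot) \limp \bot$ with root $w_0$, locates a world $v$ with $w_0 \releq v$ forcing $\varphi\limp\bot$, and then takes a final world $u$ above $v$ with no $\rel$-successors; the finiteness of the constructed countermodel is what guarantees such a $u$ exists, and $u$ is read off as a classical valuation falsifying $\varphi$. You instead prove the contrapositive semantically: assuming $\varphi$ is a classical tautology, you show $(\varphi\limp\bot)\limp\bot$ is valid in \emph{every} $\lcnxt$-model and then apply Corollary~\ref{cor:completeness}. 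The two arguments share their essential content --- completeness plus the observation that intuitionistic forcing at a world with no strict successors is exactly classical truth for modality-free formulae --- but they obtain the requisite endpoint differently: the paper gets it for free from the finite countermodel, whereas you must (and correctly do) extract an $\rel$-maximal element of the up-set of an arbitrary world from converse-well-foundedness, and then use transitivity and irreflexivity to upgrade maximality-in-the-up-set to having no strict successors at all. Your version is marginally more general in that it uses neither finiteness nor linearity, so it would transfer verbatim to $\lgkm$; the paper's version is shorter because it leans on the already-established finite-countermodel construction. Both are sound.
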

\begin{proof}
 Suppose
  $\seq (\varphi \limp \bot) \limp \bot$
  is not derivable. Then, by Thm.~\ref{thm:completeness}, 
  $(\varphi \limp \bot) \limp \bot$
  is not $\lcnxt$-valid. 
  Thus, there is a finite linear
  model 
  $M = \langle W, \rel, \vartheta \rangle$ with root world
  $w_0 \in W$ such that 
  $M, w_0 \not\forces (\varphi \limp \bot) \limp \bot$.
  Thus there is a world $v$ such that $w_0 \releq v$ and
  $M, v \forces \varphi \limp \bot$, which implies that every
  $\releq$-succesor of $v$, including a
  world $u$ (say) with no $R$-successors, makes $\varphi$ false. 
  But such a final world $u$ is just a valuation of classical
  propositional logic, thus there is a model of classical
  propositional logic which makes $\varphi$ false. That is,
  $\varphi$ is not a tautology of classical propositional logic.
\end{proof}

\begin{lemma}\label{lemma:np}
  There is a non-deterministic algorithm to test the
  refutability (non-validity) of the sequent
  $\seq \varphi$
  in time polynomial in the length of
  $\varphi$.
\end{lemma}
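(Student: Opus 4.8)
The plan is to show that non-validity of $\seq\varphi$ is witnessed by a \emph{small} model that is \emph{cheap} to verify, so that the witness can be guessed and the check performed in polynomial time. The crucial ingredient is the bounded finite model property of Thm~\ref{thm-fmp}: if $\varphi$ is not $\lcnxt$-valid then it is refuted at the root of a rooted finite $\lcnxt$-model of length at most $l+1$, where $l$ is the length of $\varphi$. Since every $\lcnxt$-model is linear and converse-well-founded, such a model is simply a finite chain $w_0 \rel w_1 \rel \cdots \rel w_m$ with $m \leq l$ and root $w_0$, and the valuation $\vartheta$ is determined, up to persistence, by its restriction to the atoms $p\in\atms$ occurring in $\varphi$, of which there are at most $l$.

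First I would describe the guessing phase. The algorithm guesses the number $m+1 \leq l+1$ of worlds and, for each atom $p$ occurring in $\varphi$, guesses the set $\vartheta(p) \subseteq \{w_0,\ldots,w_m\}$. This data has size $O(l^2)$ and is therefore guessed in polynomial time. The algorithm then checks in polynomial time that the guess is a genuine $\lcnxt$-model: it suffices to verify that each $\vartheta(p)$ is upward closed along the chain, which is exactly \textbf{persistence}, since linearity and converse well-foundedness hold automatically for any finite chain.

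Next I would describe the verification phase. Using the Kripke semantics of Section~\ref{sec:trees}, the algorithm computes the forcing table recording whether $M, w_i \forces \psi$ for every $\psi \in sf(\varphi)$ and every world $w_i$, proceeding by increasing size of $\psi$. The entries for $\land,\lor$ are local; each entry for $\limp$, $\iimp$ or $\nxt$ is a bounded quantification over the at most $l+1$ worlds of the chain, evaluated from the already-computed entries for immediate subformulae. Hence the whole table, of size at most $(l+1)\cdot|sf(\varphi)| = O(l^2)$, is filled in time polynomial in $l$, and the algorithm accepts iff $M, w_0 \not\forces \varphi$.

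Correctness then splits cleanly. If some computation path accepts, the guessed chain is a genuine $\lcnxt$-model refuting $\varphi$ at $w_0$, so $\varphi$ is non-valid. Conversely, if $\varphi$ is non-valid, Thm~\ref{thm-fmp} supplies a refuting chain of length at most $l+1$, which after identifying worlds with indices and discarding atoms not in $\varphi$ is exactly the kind of witness the algorithm can guess; on that path the verification phase confirms $M, w_0 \not\forces \varphi$ and the algorithm accepts. I expect no deep obstacle: the only thing to watch is that every guessed object and every check stays polynomial in $l$, which is guaranteed precisely by the \emph{polynomial} size bound of Thm~\ref{thm-fmp} (a bare finite model property would not suffice) together with the fact that only the polynomially many subformulae in $sf(\varphi)$ ever need to be evaluated.
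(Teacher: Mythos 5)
Your proof is correct, but it takes a genuinely different route from the paper's. You guess a \emph{semantic} witness: a chain of at most $l+1$ worlds with a persistent valuation (a bounded countermodel guaranteed by Thm~\ref{thm-fmp}), and then verify it by polynomial-time model checking over the subformula table. The paper instead guesses a \emph{syntactic} witness: a single failed branch of the proof-search procedure, i.e.\ a sequence of at most $l^2$ sequents over $cl(\varphi)$ forming legal rule applications, never hitting $\idrulename$, and ending in a saturated sequent to which no rule applies; refutability of $\seq\varphi$ then follows from the invertibility lemmas exactly as in the completeness proof. Both arguments ultimately lean on the same machinery --- your route goes through Thm~\ref{thm-fmp}, which is itself extracted from soundness, completeness and Cor.~\ref{cor:onlylsteprules}, while the paper's route uses that machinery directly --- so neither is circular and both give the polynomial bound. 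What your approach buys is modularity and genericity: any logic with a polynomial-size model property and polynomial-time model checking is in coNP, and the verification phase is a routine dynamic-programming computation independent of the calculus. What the paper's approach buys is that the witness and its verification are read off directly from the proof-search data structures already analysed (branch length $l^2$, sequent size $4l^2$), which keeps the complexity argument inside the proof theory and foreshadows an implementation. The only points to watch in your version are minor: you should either note that the refuting world in Thm~\ref{thm-fmp} is the root (as its proof shows) or simply guess the refuting world as well, and you should take $\rel$ to be the full strict order on the guessed indices so that transitivity holds; neither affects correctness or the polynomial bound.
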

\begin{proof}
  Let
  $l$ be the length of $\varphi$, and recall the definitions of 
  $sf(\varphi)$ and
  $cl(\varphi)$ given earlier.
  The number of formulae in 
  $cl(\varphi)$ is at most $2l$ and the size of each sequent our calculus
  builds is bounded by $4l^2$, since each formula of length at most
  $l$
  could appear in the
  antecedent or the succedent or both.

  Let 
  $\Gamma_1 \seq \Delta_1$,
  $\cdots$
  $,\Gamma_{k} \seq \Delta_{k}$
  be a sequence of length $k=l^2$ of sequents,
  where each sequent is built out of formulae from
  $cl(\varphi)$. 
  Check whether this sequence forms a branch of legal rule
  applications, none of which is the rule
  $\idrulename$,
  and check whether
  no rule is applicable to the sequent
  $\Gamma_{k} \seq \Delta_{k}$.
  If so, then the sequent
  $\seq \varphi$
  is refutable (non-valid).

  It remains to show that this (non-deterministic) algorithm requires
  time which is polynomial in the length $l$ of $\varphi$.  

  Every saturation phase is of length at most $l$ since each rule removes
  a connective. In any branch, there can be at most $l$ applications of the rule
  $\steprulename$ since each eventuality which is principal in such a
  rule application moves into the antecedent of the appropriate
  premise and hence cannot reappear without leading to an instance of 
  $\idrulename$. Thus every branch in any putative derivation of 
  $\seq\varphi$ is of length at most
  $k = l^2$.
  Since each sequent is of length at most 
  $4l^2$, our procedure requires at most $4l^4$ operations.
  That is, it can be done in time polynomial in the length of the
  given end-sequent.
\end{proof}

\begin{corollary}\label{cor:conp-completeness}
  The validity problem for $\lcnxt$ is coNP-complete.
\end{corollary}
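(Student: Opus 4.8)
The plan is to establish the two halves of coNP-completeness separately: membership in coNP and coNP-hardness, in each case assembling results already proved.

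For membership, I would appeal directly to Lemma~\ref{lemma:np}, which furnishes a nondeterministic procedure deciding refutability of $\seq\varphi$ in time polynomial in the length of $\varphi$. Since a formula is $\lcnxt$-valid exactly when $\seq\varphi$ is not refutable, this places non-validity in NP, and hence validity in coNP.

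For hardness, I would exhibit a polynomial-time reduction from the classical tautology problem, the canonical coNP-complete problem. Given a classical formula $\varphi$ built from atomic formulae, $\top$ and $\bot$ using only $\land$, $\lor$ and $\limp$, map it to the $\lcnxt$-formula $(\varphi\limp\bot)\limp\bot$; this transformation is computable in linear time. Combining soundness (Thm.~\ref{thm:soundness}) with completeness (Cor.~\ref{cor:completeness}) gives that $\seq(\varphi\limp\bot)\limp\bot$ is derivable iff $(\varphi\limp\bot)\limp\bot$ is $\lcnxt$-valid. Lemmas~\ref{lemma:double-negation} and~\ref{lemma:double-negation-reverse} together assert that this sequent is derivable precisely when $\varphi$ is a classical tautology. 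Chaining these equivalences yields that $\varphi$ is a tautology iff its image is $\lcnxt$-valid, which is exactly the required reduction and establishes coNP-hardness.

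Since the substantive work resides in the cited lemmas, the corollary is essentially a bookkeeping step, so I do not anticipate a genuine obstacle. The only points meriting a line of verification are that the two directions supplied by Lemmas~\ref{lemma:double-negation} and~\ref{lemma:double-negation-reverse} exactly complement one another, so that the embedding is faithful in both directions, and that the map $\varphi\mapsto(\varphi\limp\bot)\limp\bot$ is polynomial-time; both are immediate. Together with the coNP membership above, this gives coNP-completeness.
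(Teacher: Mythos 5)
Your proposal is correct and follows essentially the same route as the paper: coNP membership from the nondeterministic polynomial-time refutability check of Lemma~\ref{lemma:np}, and coNP-hardness from the faithful embedding $\varphi\mapsto(\varphi\limp\bot)\limp\bot$ justified by Lemmas~\ref{lemma:double-negation} and~\ref{lemma:double-negation-reverse}. You merely spell out a little more explicitly than the paper that both directions of the embedding are needed for faithfulness, which is a reasonable clarification rather than a deviation.
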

\begin{proof}
  By Lem.~\ref{lemma:double-negation} we can faithfully embed the
  validity problem for classical propositional logic into $\lcnxt$,
  hence it is at least as hard as checking validity in classical
  propositional logic (coNP). By Lem.~\ref{lemma:np}, we can
  non-deterministically check non-validity of a given formula
  in time at most polynomial in its size.
\end{proof}

\section{Terminating Proof Search for $\lgkm$}

This section turns to logic $\lgkm$, for which models need not be linear.
One might expect that $\lgkm$, which is conservative over $\lgint$, would require
single-conclusioned sequents only, but $\lgkm$-theorems such as the axiom
$\later\varphi\to(\psi\lor(\psi\limp\varphi))$ (see Litak~\cite{Litak:Constructive})
seem to require multiple conclusions. As such our calculus will resemble that for
$\lcnxt$. The static rules will be those of $\lcnxt$,
but the transitional rule $\steprulename$ of $\lcnxt$ is now
replaced by rules 
$\newbranchingsteprulename$ and
$\nxtrightrulename$ as shown in
Fig.~\ref{fig:step-rules-logic-KH}.

\begin{figure*}[t]
  \centering

  \begin{tabular}[c]{l@{\extracolsep{2.3em}}l}
    $\newbranchingsteprule$
    &
    $\newnxtrightrule$
    \\[15px]
    \multicolumn{2}{l}{
      where $\ddagger$ means that the following conditions hold:}
    \\[2px]
    \multicolumn{2}{l}{
      ($\condone$): $\bot \not\in \Sigma_l$ and
      $\top \not\in \Sigma_r$ and
      the conclusion is not an instance of $\idrulename$
      }
    \\[2px]
    \multicolumn{2}{l}{
    ($\condtwo$): $\Sigma_l$ and $\Sigma_r$ contain only atomic
    formulae (\textit{i.e.} the conclusion is saturated)}
  \end{tabular}
  \caption{Transitional rules for logic $\lgkm$}
  \label{fig:step-rules-logic-KH}
\end{figure*}

The backward proof-search strategy is the same as that of
Sec.~\ref{Sec:Proof_Search}, except the transitional rule
applications now reads as below:
\begin{quote}
  \textbf{transition:} else choose a $\iimp$- or $\nxt$-formula from the
  succedent and apply $\newbranchingsteprulename$ or
  $\nxtrightrulename$, backtracking over these choices until a
  derivation is found or all choices of principal formula 
  have been exhausted.
\end{quote}

So if the given sequent is
$\seq \opset{\iimp}{\Delta}, \opset{\nxt}{\Phi},\Sigma_r$
and 
$\opset{\iimp}{\Delta}$
contains $m$ formulae
and
$\opset{\nxt}{\Phi}$
contains $n$ formulae,
then in the worst case we must explore
$m$ premise instances of 
$\newbranchingsteprulename$ and
$n$  premise instances of 
$\nxtrightrulename$.


\begin{theorem}\label{km-soundness}
  The rules
  $\newbranchingsteprulename$ 
  and
  $\nxtrightrulename$
  are \emph{sound} for the logic
  $\lgkm$. 
\end{theorem}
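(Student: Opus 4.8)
The plan is to prove soundness of the two $\lgkm$ transitional rules by contraposition, exactly mirroring the structure of the $\steprulename$ soundness argument (Thm.~\ref{thm:soundness}) but adapted to the non-linear setting, where the key simplification is that each rule now has a single premise and focuses on a single principal eventuality. Concretely, for each rule I assume the conclusion is refutable at some world $w$ in a $\lgkm$-model $M$, and I produce a world refuting the unique premise. Since $\lgkm$-models are only required to be transitive and converse-well-founded (not connected), the technical device that replaces ``closest refuter in the linear order'' will be the \textbf{last refuter} guaranteed by Lem.~\ref{last-refuter}, which holds in every model since its proof uses only converse-well-foundedness.

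For the rule $\newbranchingsteprulename$, whose principal formula is $\varphi\iimp\psi$ in the succedent, I would argue as follows. Refutability of the conclusion gives $M,w\forces\Sigma_l,\opset{\nxt}{\Theta},\opset{\iimp}{\Gamma}$ and $M,w\not\forces\varphi\iimp\psi,\opset{\iimp}{\Delta},\opset{\nxt}{\Phi},\Sigma_r$. In particular $M,w\not\forces\varphi\iimp\psi$, so there is a strict successor $u$ with $w\rel u$, $M,u\forces\varphi$ and $M,u\not\forces\psi$; among such witnesses pick a \emph{last} refuter $v$ of $\varphi\limp\psi$ as in Lem.~\ref{last-refuter}, so that $M,v\forces\varphi$, $M,v\not\forces\psi$ and $M,v\forces\varphi\iimp\psi$. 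Since $w\rel v$ and $\rel$ is transitive, persistence/monotonicity (Lem.~\ref{lemma:monotonicity}) and the box/$\iimp$ semantics give $M,v\forces\Sigma_l,\Theta,\opset{\nxt}{\Theta},\opset{\limp}{\Gamma}$, together with $M,v\forces\varphi\iimp\psi,\varphi$. Thus $v$ refutes the premise $\Sigma_l,\Theta,\opset{\nxt}{\Theta},\opset{\limp}{\Gamma},\varphi\iimp\psi,\varphi\seq\psi$, as required. Crucially, because this rule discards $\opset{\iimp}{\Delta},\opset{\nxt}{\Phi},\Sigma_r$ from the premise entirely, I do \emph{not} need the linear-order minimality argument that was essential in the $\lcnxt$ case to ensure the \emph{other} succedent eventualities remain refuted at $v$.

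The rule $\nxtrightrulename$, with principal formula $\nxt\psi$, is handled the same way: from $M,w\not\forces\nxt\psi$ obtain a strict successor at which $\psi$ fails, then pass to a last refuter $v$ of $\psi$, so $M,v\not\forces\psi$ and $M,v\forces\nxt\psi$; monotonicity and the box semantics again yield $M,v\forces\Sigma_l,\Theta,\opset{\nxt}{\Theta},\opset{\limp}{\Gamma},\nxt\psi$, so $v$ refutes the premise $\seq\psi$. I would note that transitivity is exactly what lets me move the $\nxt$- and $\iimp$-antecedent formulae from $w$ across to the strict successor $v$ (a $\nxt$-formula true at $w$ remains true at every $\rel$-successor of $v$ precisely because $w\rel v\rel x$ implies $w\rel x$), and that $\Theta$ becomes available at $v$ because $M,w\forces\opset{\nxt}{\Theta}$ forces $M,v\forces\Theta$.

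The main obstacle, and the step warranting the most care, is justifying that $v$ may legitimately be taken to be a \emph{last} refuter even though the obvious witness from unfolding the $\iimp$/$\nxt$ semantics is merely \emph{some} strict successor. This is exactly where Lem.~\ref{last-refuter} earns its keep: it converts an arbitrary refuter into one additionally satisfying $\nxt(\varphi\limp\psi)$ (resp.\ $\nxt\psi$), which is what supplies the antecedent occurrence $\varphi\iimp\psi$ (resp.\ $\nxt\psi$) needed in the premise. Everything else is a routine verification that the static antecedent context transfers along $\rel$ by monotonicity and transitivity; because these rules are single-premise and single-eventuality, no delicate choice among competing succedent eventualities is needed, which is precisely why the $\lgkm$ rules are simpler than $\steprulename$.
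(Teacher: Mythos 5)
Your proof is correct and follows essentially the same route as the paper's: assume the conclusion is refutable at $w$, use Lem.~\ref{last-refuter} (converse-well-foundedness) to obtain a strict successor $v$ that is a last refuter of the principal formula, and transfer the antecedent context to $v$ via persistence, transitivity, and the $\iimp$-to-$\limp$ / $\nxt$-unfolding observations. Your added remark about why no ``closest refuter'' choice is needed (the rules discard the other succedent eventualities) is exactly the simplification the paper alludes to when it calls these rules simpler than $\steprulename$.
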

\begin{proof}
Suppose the conclusion of rule $\newbranchingsteprulename$ 
is refutable at world $w$
in some model $M$. Thus there is a strict $\rel$-successor $v$ of $w$
which is a last refuter for $\varphi\iimp\psi$: that is, $M, v \forces
\varphi$ and $M,v \not\forces \psi$ and
$M,v \forces \varphi\iimp\psi$. The other formulae from the antecedent
of the conclusion are also true at $v$ by truth-persistence,
and for every $\iimp$-formula
true at $w$, we also have its $\limp$-version true at $v$, and likewise for
$\nxt$-formulae.
The proof for the $\nxtrightrulename$ rule is similar.
\end{proof}


Termination follows using the
same argument as for $\seqlcnxt$.
%
However the new rules are not semantically invertible, since we have to choose
a particular $\iimp$- or $\nxt$-formula from the succedent of the
conclusion and discard all others when moving to the premise, yet a
different choice may have given a derivation of the conclusion. Thus these rules
require the backtracking which is built into the new \textbf{transition}
part of our proof search strategy.




\begin{lemma}\label{lemma:existential-inversion}
If a sequent $s$
obeys the $\ddagger$ conditions and every premise instance obtained by
applying the rules $\newbranchingsteprulename$ and $\nxtrightrulename$
backwards to $s$ is not derivable, then the sequent $s$ is refutable.
\end{lemma}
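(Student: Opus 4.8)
The plan is to build, directly from the failure of all step-premises, a single $\lgkm$-model refuting $s$, obtained by gluing refuters of the individual premises under a fresh root. By $\condone$ and $\condtwo$ the sequent has the saturated shape $\Sigma_l, \opset{\nxt}{\Theta}, \opset{\iimp}{\Gamma} \seq \opset{\iimp}{\Delta}, \opset{\nxt}{\Phi}, \Sigma_r$, where $\Sigma_l, \Sigma_r$ are disjoint sets of atoms with $\bot \notin \Sigma_l$ and $\top \notin \Sigma_r$, and where I write $\opset{\iimp}{\Delta}$, $\opset{\nxt}{\Phi}$ for all the $\iimp$- and $\nxt$-formulae in the succedent. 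Choosing the $i$-th such $\iimp$-formula $\varphi_i\iimp\psi_i$ as principal for $\newbranchingsteprulename$ gives the premise $P_i = \Sigma_l, \Theta, \opset{\nxt}{\Theta}, \opset{\limp}{\Gamma}, \varphi_i\iimp\psi_i, \varphi_i \seq \psi_i$, and the $j$-th $\nxt$-formula $\nxt\phi_j$ as principal for $\nxtrightrulename$ gives $Q_j = \Sigma_l, \Theta, \opset{\nxt}{\Theta}, \opset{\limp}{\Gamma}, \nxt\phi_j \seq \phi_j$; these are exactly the premise instances in the hypothesis.

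First I would pass from non-derivability to refutability of each premise. This is the inductive ingredient: the completeness argument for $\lgkm$ is organised as an induction on the number of eventualities of a sequent (equivalently, on the maximal number of $\steprulename$-style applications left in a search from it, since the termination and eventuality-counting results of Prop.~\ref{prop:reduces}, Prop.~\ref{prop:nonew} and Cor.~\ref{cor:onlylsteprules} transfer verbatim, as termination ``follows using the same argument as for $\seqlcnxt$''). Each $P_i$ and $Q_j$ moves its principal eventuality out of the succedent into the antecedent, so after saturation it has strictly fewer eventualities than $s$; hence the induction hypothesis applies and each non-derivable $P_i$, $Q_j$ is refutable. Fix refuting worlds $w_i$ in models $M_i$ (for $P_i$) and $u_j$ in models $N_j$ (for $Q_j$). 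If $\opset{\iimp}{\Delta} \cup \opset{\nxt}{\Phi} = \emptyset$ the hypothesis is vacuous and the single-point model with $\rel = \emptyset$ and $w_0 \in \vartheta(p)$ iff $p \in \Sigma_l$ refutes $s$ exactly as in Thm.~\ref{thm:completeness}, so assume at least one premise.

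Next I would glue. Take the disjoint union of the submodels generated by the refuters $w_i$ and $u_j$, adjoin a fresh root $w_0$, set $w_0 \rel x$ for every other world $x$, retain all edges inside each generated submodel, and put $w_0 \in \vartheta(p)$ iff $p \in \Sigma_l$ (valuations unchanged elsewhere). One checks this is a genuine $\lgkm$-model: persistence at $w_0$ holds because each $w_i, u_j$ forces $\Sigma_l$, so by persistence inside $M_i, N_j$ every world of the generated submodels does too; transitivity holds because the only new edges leave $w_0$, which already reaches everything; and converse-well-foundedness is preserved because any ascending chain leaves $w_0$ after at most one step and thereafter stays inside a single converse-well-founded submodel. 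Exactly as in the inner induction of Lem.~\ref{lemma-invertibility-steprule}, a formula-induction shows that forcing at every world $\neq w_0$ agrees with its value in the original submodel, since those worlds acquire no new successors.

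Finally I would verify that $w_0$ refutes $s$, which is then routine: $w_0 \forces \Sigma_l$ by definition and $w_0 \not\forces \Sigma_r$ by disjointness; $w_0 \forces \opset{\nxt}{\Theta}$ and $w_0 \forces \opset{\iimp}{\Gamma}$ because each $w_i, u_j$ forces $\Theta$ and $\opset{\limp}{\Gamma}$, which persist to all strict successors of $w_0$; $w_0 \not\forces \opset{\iimp}{\Delta}$ because each $w_i$ is a strict successor with $w_i \forces \varphi_i$ and $w_i \not\forces \psi_i$; and $w_0 \not\forces \opset{\nxt}{\Phi}$ because each $u_j$ is a strict successor with $u_j \not\forces \phi_j$. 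I expect the main obstacle to be conceptual rather than computational: unlike the linear case, where $\steprulename$ could route every eventuality through a single ``closest'' refuter, here each eventuality must be refuted in its own branch and all branches glued under one root -- which is precisely why the new rules are non-invertible and why the hypothesis must demand that \emph{every} premise fails. The real care lies in confirming that this branching frame stays transitive and converse-well-founded and that forcing is undisturbed at the inherited worlds.
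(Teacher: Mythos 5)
Your proposal is correct and follows essentially the same route as the paper's own proof: an induction on the remaining transitional-rule applications (the paper's measure) or equivalently on eventualities, using the invertibility of the static rules to pass from non-derivability to refutability of each premise instance, followed by the same gluing of all premise-refuting models under a fresh root $w_0$ valued by $\Sigma_l$. The only difference is presentational: you make explicit the frame-condition checks (transitivity, converse-well-foundedness, persistence) and the truth-preservation at inherited worlds, which the paper leaves implicit.
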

\begin{proof}
  We proceed by induction on the maximum number $k$ of applications of the
  transitional rules in any branch of backward proof search for $s$.

  Base case $k=0$: if $s$ obeys the $\ddagger$ conditions but contains no
  $\iimp$-formulae and contains no $\nxt$-formulae in its succedent,
  then no rule at all is applicable to $s$ and so $s$ is
  refutable as already shown in the proof of
  Thm.~\ref{thm:completeness}.

  Base case $k=1$: if $s$ obeys the $\ddagger$ conditions and the proof-search involves
  at most one application of the transitional rules in any branch,
  then each premise instance of $s$ leads upwards to at least one
  non-derivable leaf sequent to which no rule is applicable. This leaf
  is again refutable as shown in the proof of
  Theorem~\ref{thm:completeness}. The Inversion Lemmas then allow us
  to conclude that the premise instance itself is refutable since all
  rule applications in this branch must be static rules.
  Thus each
  premise instance $\pi_i$ of $s$ under the transitional rules is
  refutable in some world $w_i$ in some model $M_i$. Let $w_0$ be a
  new world and put $w_0 R w_i$ for every $w_i$ and put $w_0 R w$ for
  each $w$ which is an $\releq_i$-successor of any $w_i$ in any model
  $M_i$, and put $w_0 \in \vartheta(p)$ iff $p$ is in the antecedent
  of $s$. The new world $w_0$ makes every atomic formula in the antecedent
  of $s$ true and makes every atomic formula in the succedent of $s$
  false.  There are no conjunctions or disjunctions or $\limp$-formulae in
  $s$. 
  Every $\varphi\iimp\psi$ in the antecedent of $s$ appears in the
  antecedent of every premise instance $\pi_i$ as $\varphi\limp\psi$, so each
  $w_i$ makes $\varphi\limp\psi$ true, and hence $w_0$ makes
  $\varphi\iimp\psi$ true.
  Every $\nxt\psi$ in the antecedent of $s$ appears in the
  antecedent of every premise instance $\pi_i$ and so does $\psi$, so
  each $w_i$ makes  $\psi$ true, and hence $w_0$ makes 
  $\nxt\psi$ true.
  For every $\iimp$-formula $\varphi\iimp\psi$ in
  the succedent of $s$, the premise instance $\pi_i$ corresponding to a
  $\newbranchingsteprulename$-rule application with $\varphi\iimp\psi$
  as the principal formula will contain $\varphi$ in its antecedent 
  and contain $\psi$ in its succedent. The corresponding world $w_i$
  will make $\varphi$ true and make $\psi$
  false, meaning that $w_0$ will falsify $\varphi\iimp\psi$.
  Similarly, for every $\nxt$-formula $\nxt\psi$ in the succedent of
  $s$, the world $w_j$ obtained from a $\nxtrightrulename$-rule
  application with $\nxt\psi$ as the principal formula will falsifiy
  $\psi$, meaning that $w_0$ will falsify $\nxt\psi$. Thus $w_0$ will
  refute $s$ as claimed.

  Induction case $k+1$ for $k>0$: The induction hypothesis is that the
  lemma holds for all sequents
  $s$ that obey the $\ddagger$-conditions and whose proof-search
  involves at most $k$ applications of the transitional rules in
  any branch.

  Now suppose that $s$ obeys the $\ddagger$-conditions and the
  backward proof search for $s$ contains $k+1$ applications of the
  transitional rules. Consider the bottom-most application of the
  transitional rules (if any) along any branch ending at a premise instance
  $\pi$ of $s$. Suppose the conclusion sequent of this bottom-most
  application is $c$. This
  application falls under the induction hypothesis and so $c$ must be
  falsifiable in some model. The rules between $c$ and $\pi$ are all
  static rules, if any, and so are semantically invertible, meaning
  that the sequent $\pi$ must be falsifiable in some model. 
  Thus each
  premise instance $\pi_i$ of $s$ under the transitional rules is
  refutable in some world $w_i$ in some model $M_i$. The same
  construction as in the base case for $k=1$ suffices to deliver a
  model and a world that refutes $s$ as claimed.
\end{proof}

\begin{corollary}
 If the end-sequent $\Gamma_0\seq \Delta_0$ is not derivable using
 backward proof search according to our strategy then 
 $\Gamma_0\seq \Delta_0$ is refutable.
\end{corollary}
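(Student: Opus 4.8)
The plan is to generalise Lemma~\ref{lemma:existential-inversion} from sequents already obeying the $\ddagger$ conditions to an arbitrary end-sequent, by prefixing a saturation phase governed by the (semantically invertible) static rules. The real content is already carried by Lemma~\ref{lemma:existential-inversion}, so this corollary amounts to gluing the saturation phase onto the transitional step and pushing refutability back down to the root.

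First I would dispose of the trivial case: if $\Gamma_0\seq\Delta_0$ were an instance of a termination rule it would be derivable, contradicting the hypothesis, so no termination rule applies. I would then run the saturation phase on $\Gamma_0\seq\Delta_0$; by Prop.~\ref{prop:saturation-terminates} this terminates, producing a finite subtree all of whose leaves are saturated. Because every static rule is a genuine rule of the calculus, derivability of all its premises would yield derivability of its conclusion; contrapositively, since $\Gamma_0\seq\Delta_0$ is \emph{not} derivable, at each static inference at least one premise is not derivable. Following such a non-derivable premise up the saturation subtree, I reach a saturated leaf sequent $s$ that is not derivable.

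Next I would verify that $s$ obeys $\ddagger$. Condition $\condtwo$ holds because $s$ is saturated. For $\condone$: if $\bot\in\Sigma_l$, or $\top\in\Sigma_r$, or $s$ were an instance of $\idrulename$, then $s$ would be closed by a termination rule and hence derivable, contradicting the choice of $s$; so $\condone$ holds. Since $s$ is saturated and obeys $\ddagger$, the only rules applicable to it are the transitional rules $\newbranchingsteprulename$ and $\nxtrightrulename$, each of which is unary. Hence $s$ is derivable iff \emph{some} transitional premise instance is derivable, and since $s$ is not derivable, \emph{every} transitional premise instance obtained by applying these rules backwards to $s$ is not derivable.

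This places me exactly in the hypothesis of Lemma~\ref{lemma:existential-inversion}, which then yields a model refuting $s$. Finally I would propagate this refutation down to the root: $s$ lies above $\Gamma_0\seq\Delta_0$ in the saturation subtree through static rules only, and by Lemma~\ref{lemma-invertibility-static-rules} every static rule is semantically invertible, so a refutable premise forces a refutable conclusion. A straightforward induction down the finite saturation subtree therefore shows that $\Gamma_0\seq\Delta_0$ is refutable, as required. The one step needing care is the equivalence ``$s$ not derivable iff every transitional premise of $s$ is not derivable,'' which relies on the transitional rules being unary and on $\ddagger$ excluding every other rule at $s$; the remainder is routine bookkeeping through the saturation subtree.
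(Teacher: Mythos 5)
Your proposal is correct and follows exactly the route the paper intends for this corollary (which it leaves implicit as an immediate consequence): saturate, locate a non-derivable saturated leaf satisfying $\ddagger$, invoke Lemma~\ref{lemma:existential-inversion} to refute it, and push the refutation back to the root via the semantic invertibility of the static rules (Lemma~\ref{lemma-invertibility-static-rules}). Your observation that the unary transitional rules plus the backtracking strategy convert ``$s$ not derivable'' into ``every transitional premise instance not derivable'' is precisely the hinge that makes the lemma applicable, and you have handled it correctly.
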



\begin{corollary}\label{km-completeness}
If $\varphi_0$ is $\lgkm$-valid then
$\vdash \varphi_0$ is $\seqkm$-derivable.
\end{corollary}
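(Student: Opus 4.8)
The final statement to prove is Corollary~\ref{km-completeness}: if $\varphi_0$ is $\lgkm$-valid then $\vdash\varphi_0$ is $\seqkm$-derivable. This is the completeness direction for the $\lgkm$ calculus. The plan is to prove the contrapositive: if $\vdash\varphi_0$ is not $\seqkm$-derivable using the backward proof-search strategy, then $\varphi_0$ is not $\lgkm$-valid, i.e.\ it has a refuting model. The main engine for this is already in place as Lemma~\ref{lemma:existential-inversion}, which says that a sequent obeying the $\ddagger$ conditions all of whose transitional premise instances are non-derivable is itself refutable.

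First I would observe that, by the termination argument noted immediately after Theorem~\ref{km-soundness}, backward proof search for $\vdash\varphi_0$ terminates, producing a finite search tree. Since $\vdash\varphi_0$ is assumed non-derivable, this tree fails to be a proof, so along the search there is a failure that the Corollary stated just before (``if the end-sequent is not derivable then it is refutable'') packages up: that immediately preceding corollary is essentially the desired statement applied to the specific end-sequent $\vdash\varphi_0$. So the bulk of the work is to connect non-derivability of $\vdash\varphi_0$ to refutability via Lemma~\ref{lemma:existential-inversion}, then to read off that the refuting model is a $\lgkm$-model and that it refutes $\varphi_0$ at its root.

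Concretely, the steps are: (1) assume $\vdash\varphi_0$ is not $\seqkm$-derivable under the search strategy; (2) run backward proof search, which saturates using the static rules and then, at each saturated sequent with an eventuality in the succedent, tries the transitional rules $\newbranchingsteprulename$ and $\nxtrightrulename$ with backtracking; (3) apply Lemma~\ref{lemma:existential-inversion} to $\vdash\varphi_0$, noting that $\vdash\varphi_0$ either reduces (by semantically invertible static rules) to saturated sequents obeying $\ddagger$, or directly falls under that lemma, so that non-derivability yields a model $M$ and root world $w_0$ refuting $\varphi_0$; (4) verify the frame produced by the construction in Lemma~\ref{lemma:existential-inversion} is a genuine $\lgkm$-frame, i.e.\ $R$ is transitive and converse-well-founded. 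Transitivity holds because $w_0$ is made to reach every $\releq_i$-successor of each $w_i$ and each $M_i$ is already transitive; converse-well-foundedness holds because each submodel $M_i$ is converse-well-founded and we prepend only the single new root $w_0$, with the number of transitional applications bounded (Cor.-style termination), so no infinite ascending $R$-chain is created. Finally, persistence of the valuation is inherited from the construction, so $M$ is a legitimate $\lgkm$-model, and since $M,w_0\not\forces\varphi_0$, the formula $\varphi_0$ is not $\lgkm$-valid, giving the contrapositive.

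The main obstacle I expect is not the top-level logical structure, which is a routine contrapositive threaded through Lemma~\ref{lemma:existential-inversion}, but rather the careful bookkeeping ensuring the amalgamated model built by gluing the several premise countermodels $M_i$ under a fresh root $w_0$ is a well-defined $\lgkm$-model: one must check transitivity across the new edges, converse-well-foundedness of the glued relation, and \textbf{persistence} of the valuation at $w_0$ (the atoms true at $w_0$ are exactly those in the antecedent, and these must remain true at every $R$-successor). The subtle point is that, unlike the linear $\lcnxt$ case where a single ``closest'' successor is chosen, here we must simultaneously glue \emph{all} non-derivable transitional premises so that every succedent eventuality is genuinely falsified at $w_0$; verifying that this simultaneous gluing falsifies each $\varphi\iimp\psi$ and each $\nxt\psi$ in the succedent while keeping each antecedent eventuality true is where the argument must be handled with care, but this is precisely the content already discharged inside the proof of Lemma~\ref{lemma:existential-inversion}, so at the level of the Corollary it can be invoked directly.
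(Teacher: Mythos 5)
Your proposal is correct and follows essentially the same route as the paper: the corollary is the contrapositive reading of the immediately preceding corollary, which in turn rests on Lemma~\ref{lemma:existential-inversion} together with the semantic invertibility of the static rules and termination of proof search. The extra bookkeeping you flag (transitivity, converse-well-foundedness, and persistence of the glued model) is indeed already discharged inside the proof of that lemma, so nothing further is needed at the level of the corollary.
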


As for $\lcnxt$, our proofs yield the finite model property for $\lgkm$ as an immediate
consequence, although for $\lgkm$ this is already known~\cite{Muravitsky:Logic}.


\section{Related Work}


Ferrrari~et~al~\cite{DBLP:journals/jar/FerrariFF13} 
give sequent calculi for 
intuitionistic logic using
a compartment $\Theta$ in
the antecedents of their
sequents $\Theta ; \Gamma \seq \Delta$. This 
compartment contains
formulae that are not necessarily true now, but are true 
in all strict successors.
Fiorino~\cite{DBLP:journals/corr/abs-1206-4458} gives a sequent calculus using
this compartment for $\lglc$.
This yields linear depth derivations, albeit requiring a semantic check which is quadratic.
Both~\cite{DBLP:journals/jar/FerrariFF13,DBLP:journals/corr/abs-1206-4458} build in
aspects of G\"{o}del-L\"{o}b logic by allowing (sub)formulae to
cross from the succedent of the conclusion into the
compartment $\Theta$.
Our calculus differs by giving syntactic
analogues $\nxt$ and $\iimp$ for these meta-level features,
and by requiring no compartments, but
it should be possible to adapt these authors' work to 
design sequent calculi for $\lcnxt$ with linear depth derivations.

Restall~\cite{restall-subintuitionistic-logics} investigates
``subintuitionistic logics'' where each of the conditions on Kripke frames of
reflexivity, transitivity and persistence can be dropped. The logic of our novel
connective $\iimp$ can be seen as the logic bka, which lacks reflexivity, but
has the additional conditions of linearity and converse well-foundedness, which Restall
does not consider. 
The models studied by Restall all require a root world, and thus
they disallow sequences $\cdots x_3 R x_2 R x_1$ 
which are permitted by $\lcnxt$-models.
Ishigaki and Kikuchi~\cite{DBLP:conf/tableaux/IshigakiK07} 
give ``tree-sequent'' calculi for the first-order versions of
some of these subintuitionistic logics. 
Thus ``tree-sequent'' calculi
for $\lgkm$ and $\lcnxt$ are possible, but our calculi require no labels.

Labelled sequent calculi for $\lgkm$
and $\lcnxt$ are possible by extending the work of 
Dyckhoff and Negri~\cite{DBLP:journals/aml/DyckhoffN12} but
termination proofs and complexity results
for labelled calculi are significantly harder than
our proofs. 
%

Garg et al~\cite{DBLP:conf/lics/GargGN12} give labelled
sequent
calculi for intuitionistic modal logics and general conditions on
decidability. Their method relies on a first-order
characterisation of the underlying Kripke relations, but
converse well-foundedness is not first-order definable.
Labelled calculi can handle converse well-founded frames 
by allowing formulae to ``cross'' sides as in our
calculus, but it is not clear whether the method of 
Garg et al~\cite{DBLP:conf/lics/GargGN12} then applies.






Our complexity results follow directly from our calculi; a possible alternative
may be to adapt the polynomial encoding
of $\lglc$ into
classical satisfiability~\cite{chagrov-zakharyaschev-modal-logic}.

\section{Conclusion}

We have seen that 
the internal \emph{propositional} logic of the topos
of trees is $\lcnxt$. Indeed it may be tempting to think that $\lcnxt$
is just $\lglc$, as both are sound and complete with respect to the
class of finite sequences of reflexive points, but note that we cannot
express the modality $\nxt$ in terms of the connectives of $\lglc$.

Linear frames seem concordant with the \emph{step-indexing} applications of later,
based as they are on induction on the natural numbers rather than any
branching structure, but seem less natural from a \emph{types} point of
view, which tend to build on intuitionistic logic. For a possible type-theoretic
intepretation of linearity see Hirai's $\lambda$-calculus for $\lglc$ with applications to `waitfree' computation~\cite{Hirai:Lambda}. More broadly our work provides a
proof-theoretical basis for future research into computational aspects of intuitionistic
G\"{o}del-L\"{o}b provability logic.

The topos of trees, which generalises some previous models, has itself been
generalised as a model of guarded recursion in several ways~\cite{Birkedal:First,Birkedal:Intensional,Milius:Guard}. These categories do not all
correspond to $\lcnxt$; some clearly fail to be linear. The logical content of these
general settings may also be worthy of study.

The most immediate application of our proof search algorithm may be to provide
automation for program logics that use
later~\cite{Hobor:Oracle,Bengtson:Verifying,Clouston:Programming}. Support for
a richer class of connectives, such as first and higher order quantifiers, would be
desirable. We in particular note the `backwards looking box' used
by Bizjak and Birkedal~\cite{Bizjak:Model} in sheaves over the first uncountable
ordinal $\omega_1$, and subsequently in the topos of trees by Clouston et
al~\cite{Clouston:Programming} to reason about
coinductive types.

\subsubsection*{Acknowledgments}
We gratefully acknowledge helpful discussions with Lars Birkedal,
Stephan\'e Demri, Tadeusz Litak, and Jimmy Thomson, and the comments
of the reviewers of this and a previous unsuccessful submission.

\bibliographystyle{splncs03}
\bibliography{original-blbl}

\end{document}